\documentclass[12pt]{article}
\usepackage{graphicx} 

\usepackage[margin=2cm]{geometry}

\usepackage{tikz}
\usepackage{amsmath}
\usepackage{amsfonts}
\usepackage{algorithm}
\usepackage{algpseudocode}
\usepackage{subcaption}
\usepackage{hyperref}
\usepackage{wrapfig}
\usepackage{mathtools}

\usepackage[font=bf]{caption}

\usepackage{natbib}
\usetikzlibrary{arrows,shapes,decorations.markings,arrows.meta,bending}
\usetikzlibrary{decorations.pathreplacing,angles,quotes}
\definecolor{primary}{RGB}{0, 99, 130}
\definecolor{secondary}{RGB}{180, 18, 27}

\usepackage[parfill]{parskip}

\newtheorem{theorem}{Theorem}%
\newtheorem{proposition}[theorem]{Proposition}%

\newcommand{\Pbb}{\mathbb{P}}
\newcommand{\thetab}{\boldsymbol{\theta}}
\newcommand{\Xb}{\boldsymbol{X}}
\newcommand{\Yb}{\boldsymbol{Y}}
\newcommand{\Ub}{\boldsymbol{U}}
\newcommand{\Nc}{\mathcal{N}}
\newcommand{\Tc}{\mathcal{T}}
\newcommand{\Sc}{\mathcal{S}}
\newcommand{\Kc}{\mathcal{K}}

\definecolor{tab_blue}{HTML}{1f77b4}
\definecolor{tab_red}{HTML}{d62728}
\definecolor{tab_green}{HTML}{2ca02c}
\definecolor{tab_olive}{HTML}{bcbd22}
\definecolor{tab_purple}{HTML}{9467bd}
\definecolor{tab_orange}{HTML}{ff7f0e}

\title{Non-centred Bayesian inference for discrete-valued state-transition models: the Rippler algorithm}
\author{James Neill\textsuperscript{1}, Lloyd A. C. Chapman\textsuperscript{1,$\dagger$} and Chris Jewell\textsuperscript{1,$\dagger$}}
\date{\footnotesize \textsuperscript{1}School of Mathematical Sciences, Lancaster University, Lancaster, LA1 4YF, UK \\ ~ \\ Corresponding author: \href{email:j.neill@lancaster.ac.uk}{j.neill@lancaster.ac.uk} \\ $\dagger$ Joint last authors.}

\begin{document}

\maketitle

\begin{abstract}
    Stochastic state-transition models of infectious disease transmission can be used to deduce relevant drivers of transmission when fitted to data using statistically principled methods. Fitting this individual-level data requires inference on individuals' unobserved disease statuses over time, which form a high-dimensional and highly correlated state space. We introduce a novel Bayesian (data-augmentation Markov chain Monte Carlo) algorithm for jointly estimating the model parameters and unobserved disease statuses, which we call the Rippler algorithm. This is a non-centred method that can be applied to any individual-based state-transition model. We compare the Rippler algorithm to the state-of-the-art inference methods for individual-based stochastic epidemic models and find that it performs better than these methods as the number of disease states in the model increases.
\end{abstract}

\textit{Keywords:} Bayesian inference; Data-augmentation; Epidemics; Individual-based model; Markov chain Monte Carlo methods; State-transition models

\section{Introduction}

Infectious disease transmission is often analysed using stochastic individual-based models. These epidemic models often comprise a state-transition process, where the population of interest is partitioned into a discrete set of states that individuals can transition between. Importantly, the rate of transition between states for one individual can depend on the current state of other individuals, complicating epidemic model inference.  

Hidden Markov models are a natural way of modelling infectious disease dynamics in discrete time; observation data are assumed to be noisy measurements of a hidden epidemic process. An individual-based (discrete-time) epidemic can be formulated as a coupled hidden Markov model (CHMM). A CHMM is a collection of hidden Markov models, with an added dependency structure between the hidden states in each Markov chain: the hidden state of any chain at a given time-point is dependent on the hidden states of \textit{all chains} at the previous time-point. The observations for any chain are still only dependent on the corresponding hidden states of that chain. Each chain represents one individual's true status over the course of the epidemic.

The hidden states in a CHMM occupy a high-dimensional and highly-correlated state space, making the use of frequentist methods for inference difficult. Instead, Bayesian methods are used to estimate the hidden states and model parameters jointly. The most commonly used technique is reversible-jump Markov chain Monte Carlo (MCMC) \citep{o1999bayesian, gibson1998estimating}. This method is computationally less demanding per update to the hidden state than other methods, but typically produces more highly correlated samples and requires a unique implementation for different epidemic models \citep{spencer2015super, chapman2020inferring}. Importance sampling methods are challenged as the dimension of the latent state increases \citep{rimella2023inference}. An alternative technique is individual forward filtering backward sampling (iFFBS), originally proposed in \cite{touloupou2020scalable}. This is an extension of the general forward filtering backward sampling (FFBS) algorithm \citep{carter1994gibbs, chib1996calculating}. iFFBS produces less correlated samples than reversible-jump MCMC, but has a high computational cost for epidemic models with many compartments.

In this paper, we develop a novel MCMC algorithm for inference on the hidden states of a CHMM. Our algorithm proposes a new latent variable state by making a small change to an individual’s hidden states, and then simulating forwards using the transmission model. This is a non-centred approach that moves event probability calculations from the Metropolis-Hastings accept-reject step into the proposal. We call this algorithm the \textit{Rippler method}, as the small change causes ``ripples'' through the latent variable space. The algorithm is an extension of the method proposed in \cite{neill2025non}, expanded from one simple model to any CHMM. Section \ref{sec_model} defines relevant notation and interprets individual-based epidemic models as CHMMs. The Rippler method is detailed in Section \ref{sec_rippler}, where we also explain how the algorithm can be extended to incorporate the observation data in the proposal step. In Section \ref{sec_comp} we detail existing inference methods, and in Section \ref{sec_sim} we compare them to the Rippler algorithm using simulation studies. We find that the Rippler algorithm compares favourably to the other methods, exploring the latent variable space more efficiently than reversible-jump MCMC and scaling better with the number of model compartments than iFFBS.

\section{Epidemic Modelling} \label{sec_model}

\subsection{Coupled Hidden Markov Models}

Consider a CHMM with $\Nc$ individuals (chains) and $\Tc$ time-points. Let $x_{t,j}$ be the hidden state for individual $j \in \{1,\dots,\Nc\}$ at time-point $t \in \{1,\dots,\Tc\}$. We assume there are a finite number of hidden states $\Sc$, so let $x_{t,j} \in \{1,\dots,\Sc\}$. Let $y_{t,j}$ be the observed state corresponding to $x_{t,j}$. We make no assumptions on the state space of $y_{t,j}$. Note that some $y_{t,j}$ may be undefined because of missing observation data. The hidden states are collected as $\Xb = (x_{t,j})_{t \in \{1,\dots,\Tc\}, j \in \{1,\dots,\Nc\}}$ and the observed states are collected as $\Yb = (y_{t,j})_{t \in \{1,\dots,\Tc\}, j \in \{1,\dots,\Nc\}}$. A visualisation of a CHMM is shown in Figure \ref{fig_chmm}.

Let $\thetab$ be the vector of model parameters. For each individual $j$ and time-point $t$, the value of the next hidden state $x_{t+1,j}$ is dependent on the model parameters $\thetab,$ the current hidden state $x_{t,j}$, and the current hidden states of all other individuals $\boldsymbol{x}_{t,-j} = (x_{t,i})_{i\in\{1,\dots,\Nc\}\backslash\{j\}}$. The hidden state of each individual is an inhomogeneous Markov chain with transition probabilities \[ p^{(t,j)}_{r,s} = \Pbb(x_{t+1,j}=s|x_{t,j}=r,\boldsymbol{x}_{t,-j},\thetab) \] for all $r,s \in \{1,\dots,\Sc\}$. The calculation of these transition probabilities for epidemic models is discussed in Section \ref{sec_model_sub_probs}. The value of the hidden state $x_{1,j}$ is determined by some prior distribution, dependent on the parameters $\thetab$. The value of each observation $y_{t,j}$ is dependent on the corresponding hidden state $x_{t,j}$ and the parameters $\thetab$, and so is related by some probability mass (or density) function $f(y_{t,j}|x_{t,j},\thetab)$. If $y_{t,j}$ is undefined then we have $f(y_{t,j}|x_{t,j},\thetab) = 1$.

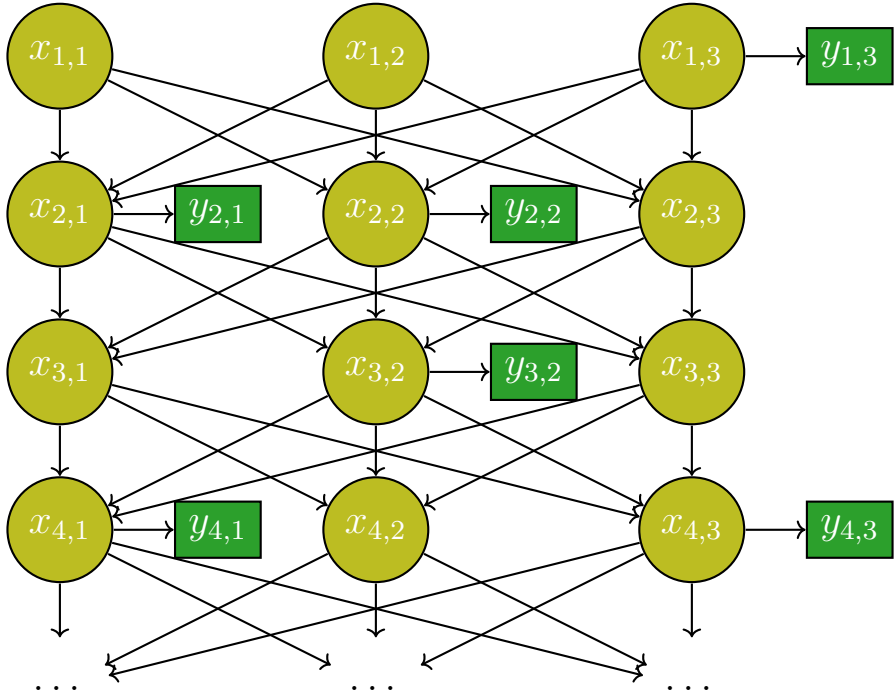
\begin{figure}[H]
    \centering
    \begin{tikzpicture}[thick, scale=1.4, every node/.style={scale=1.3}]
        \node[thick, draw, circle, minimum size=1cm, fill=tab_olive, text=white] (x10) at (0,0) {$x_{1,1}$};
        \node[thick, draw, circle, minimum size=1cm, fill=tab_olive, text=white] (x11) at (0,-1.5) {$x_{2,1}$};
        \node[thick, draw, rectangle, minimum size=0.1cm, fill=tab_green, text=white] (y11) at (1.5,-1.5) {$y_{2,1}$};
        \node[thick, draw, circle, minimum size=1cm, fill=tab_olive, text=white] (x12) at (0,-3) {$x_{3,1}$};
        \node[thick, draw, circle, minimum size=1cm, fill=tab_olive, text=white] (x13) at (0,-4.5) {$x_{4,1}$};
        \node[thick, draw, rectangle, minimum size=0.1cm, fill=tab_green, text=white] (y13) at (1.5,-4.5) {$y_{4,1}$};
        \node[thick, circle, minimum size=1cm] (dots1) at (0,-6) {$\dots$};
        \draw[->,thick] (x11)--(y11);
        \draw[->,thick] (x13)--(y13);
        \draw[->,thick] (x10)--(x11);
        \draw[->,thick] (x11)--(x12);
        \draw[->,thick] (x12)--(x13);
        \draw[->,thick] (x13)--(dots1);
        \node[thick, draw, circle, minimum size=1cm, fill=tab_olive, text=white] (x20) at (3,0) {$x_{1,2}$};
        \node[thick, draw, circle, minimum size=1cm, fill=tab_olive, text=white] (x21) at (3,-1.5) {$x_{2,2}$};
        \node[thick, draw, rectangle, minimum size=0.1cm, fill=tab_green, text=white] (y21) at (4.5,-1.5) {$y_{2,2}$};
        \node[thick, draw, circle, minimum size=1cm, fill=tab_olive, text=white] (x22) at (3,-3) {$x_{3,2}$};
        \node[thick, draw, rectangle, minimum size=0.1cm, fill=tab_green, text=white] (y22) at (4.5,-3) {$y_{3,2}$};
        \node[thick, draw, circle, minimum size=1cm, fill=tab_olive, text=white] (x23) at (3,-4.5) {$x_{4,2}$};
        \node[thick, circle, minimum size=1cm] (dots2) at (3,-6) {$\dots$};
        \draw[->,thick] (x21)--(y21);
        \draw[->,thick] (x22)--(y22);
        \draw[->,thick] (x20)--(x21);
        \draw[->,thick] (x21)--(x22);
        \draw[->,thick] (x22)--(x23);
        \draw[->,thick] (x23)--(dots2);
        \node[thick, draw, circle, minimum size=1cm, fill=tab_olive, text=white] (x30) at (6,0) {$x_{1,3}$};
        \node[thick, draw, rectangle, minimum size=0.1cm, fill=tab_green, text=white] (y30) at (7.5,0) {$y_{1,3}$};
        \node[thick, draw, circle, minimum size=1cm, fill=tab_olive, text=white] (x31) at (6,-1.5) {$x_{2,3}$};
        \node[thick, draw, circle, minimum size=1cm, fill=tab_olive, text=white] (x32) at (6,-3) {$x_{3,3}$};
        \node[thick, draw, circle, minimum size=1cm, fill=tab_olive, text=white] (x33) at (6,-4.5) {$x_{4,3}$};
        \node[thick, draw, rectangle, minimum size=0.1cm, fill=tab_green, text=white] (y33) at (7.5,-4.5) {$y_{4,3}$};
        \node[thick, circle, minimum size=1cm] (dots3) at (6,-6) {$\dots$};
        \draw[->,thick] (x30)--(y30);
        \draw[->,thick] (x33)--(y33);
        \draw[->,thick] (x30)--(x31);
        \draw[->,thick] (x31)--(x32);
        \draw[->,thick] (x32)--(x33);
        \draw[->,thick] (x33)--(dots3);
        \draw[->,thick] (x10)--(x21);
        \draw[->,thick] (x10)--(x31);
        \draw[->,thick] (x20)--(x11);
        \draw[->,thick] (x20)--(x31);
        \draw[->,thick] (x30)--(x11);
        \draw[->,thick] (x30)--(x21);
        \draw[->,thick] (x11)--(x22);
        \draw[->,thick] (x11)--(x32);
        \draw[->,thick] (x21)--(x12);
        \draw[->,thick] (x21)--(x32);
        \draw[->,thick] (x31)--(x12);
        \draw[->,thick] (x31)--(x22);
        \draw[->,thick] (x12)--(x23);
        \draw[->,thick] (x12)--(x33);
        \draw[->,thick] (x22)--(x13);
        \draw[->,thick] (x22)--(x33);
        \draw[->,thick] (x32)--(x13);
        \draw[->,thick] (x32)--(x23);
        \draw[->,thick] (x13)--(dots2);
        \draw[->,thick] (x13)--(dots3);
        \draw[->,thick] (x23)--(dots1);
        \draw[->,thick] (x23)--(dots3);
        \draw[->,thick] (x33)--(dots1);
        \draw[->,thick] (x33)--(dots2);
    \end{tikzpicture}
    \caption{Visualisation of a coupled hidden Markov model with $\Nc =3$ (from $t=1$ to $t=4$). Note that only a subset of time-individual points have observations. Based on a figure from \cite{touloupou2020scalable}.}
    \label{fig_chmm}
\end{figure}

\subsection{Transition Probabilities for Epidemics} \label{sec_model_sub_probs}

In epidemic models of interest, individuals are partitioned into a finite set of states, and can transition between those states at discrete time-points. The rate of transition between states is dependent on the current state of other individuals, the model parameters, and the current time-point. More formally, the behaviour of a given model (at time-point $t$ for individual $j$) can be expressed using a Markov transition rate matrix $\boldsymbol{Q}^{(t,j)} = (q^{(t,j)}_{r,s})_{r,s \in \{1,\dots,\Sc\}}$. If $r \neq s$, then $q^{(t,j)}_{r,s}$ is the rate of transition from state $r$ to state $s$. If $s = r$, then $q^{(t,j)}_{r,r}$ is the negative total rate of transition out of state $r$, and so $q^{(t,j)}_{r,r} = -\sum_{s=1, s\neq r}^\Sc q^{(t,j)}_{r,s}$.

We are interested in determining the transition probability matrix $\boldsymbol{P}^{(t,j)} = (p^{(t,j)}_{r,s})_{r,s \in \{1,\dots,\Sc\}}$. The direct approach to calculate $\boldsymbol{P}^{(t,j)}$ from $\boldsymbol{Q}^{(t,j)}$ is to use the matrix exponential $\boldsymbol{P}^{(t,j)}=e^{\boldsymbol{Q}^{(t,j)}}$. However, calculating the matrix exponential is often computationally slow, and so we use a different approach. The time until the next transition for each individual is assumed exponentially distributed, and so the probability of any transition occurring in one time unit is $1-e^{q^{(t,j)}_{r,r}}$, where $r$ is the current state of individual j at time-point $t$. The probability of transitioning to any particular state $s \neq r$ is then proportional to the ratio of the transition rates $q^{(t,j)}_{r,s}/-q^{(t,j)}_{r,r}$. This means \[ p^{(t,j)}_{r,s} = \begin{cases}
    0 & \text{ if } s \neq r, ~q^{(t,j)}_{r,r} = 0,
    \vspace{10pt}
    \\ \frac{q^{(t,j)}_{r,s}}{-q^{(t,j)}_{r,r}}\left(1-e^{q^{(t,j)}_{r,r}}\right) & \text{ if } s \neq r, ~q^{(t,j)}_{r,r} \neq 0,
    \vspace{10pt}
    \\ e^{q^{(t,j)}_{r,r}} & \text{ if } s = r,
\end{cases} \] for any $r,s \in \{1,\dots,\Sc\}$. 

\subsection{Simulation and Non-centering} \label{sec_model_sub_sim}

These transition probabilities can be used to simulate the hidden states $\Xb$ in the CHMM. If we only have two states, we draw from a Bernoulli distribution. In the general case of $\Sc$ states, we use a categorial distribution with $\Sc$ categories. For the initial hidden states, we draw \[ x_{1,j} \sim \text{Categorial}(\Sc,(\tilde{p}^{(j)}_{1},\dots,\tilde{p}^{(j)}_{\Sc})) \] for every individual $j \in \{1,\dots,\Nc\}$, where $\tilde{p}^{(j)}_s$ is the prior probability of individual $j$ initially being in state $s$. Then for $t \in \{1,\dots,\Tc-1\}$ we draw \[ x_{t+1,j} \sim \text{Categorial}(\Sc,(p^{(t,j)}_{x_{t,j},1},\dots,p^{(t,j)}_{x_{t,j},\Sc})) \] for every individual $j \in \{1,\dots,\Nc\}$. These random draws can be simulated in parallel \textit{at any given time-point}, but must be simulated in order \textit{across time-points}.

Alternatively, we can take a non-centred approach to determining $\Xb$. Instead of drawing directly from a Categorical distribution, we draw $u \sim \text{Unif}(0,1)$ and determine the hidden state by comparing $u$ to the transition probabilities. Let $g$ be the function that compares $u$ to the transition probabilities $(p_1,\dots,p_\Sc)$; $g(u,\Sc,(p_1,\dots,p_\Sc))$ returns $s \in \{1,\dots,\Sc\}$ such that $\sum_{r=1}^{s-1} p_r < u < \sum_{r=1}^{s} p_r$. Details of the derivation of $g$ are given in Section \ref{sec_supp_noncentred}. For the initial hidden states, we draw $u_{1,j} \sim \text{Unif}(0,1)$ and calculate \begin{equation*}
    x_{1,j} = g(u_{1,j},\Sc,(\tilde{p}^{(j)}_{1},\dots,\tilde{p}^{(j)}_{\Sc})) 
\end{equation*} for every individual $j \in \{1,\dots,\Nc\}$. Then for $t \in \{1,\dots,\Tc-1\}$ we draw $u_{t+1,j} \sim \text{Unif}(0,1)$ and calculate \begin{equation*}
    x_{t+1,j} = g(u_{t+1,j},\Sc,(p^{(t,j)}_{x_{t,j},1},\dots,p^{(t,j)}_{x_{t,j},\Sc})) 
\end{equation*} for every individual $j \in \{1,\dots,\Nc\}$. Note that while all random draws can be simulated in parallel, we must still make the calculations in order across time-points (since determining $x_{t+1,j}$ requires the value of $x_{t,j}$). We see that all elements of $\Ub = (u_{t,j})_{t \in \{1,\dots,\Tc\}, j \in \{1,\dots,\Nc\}}$ are \textit{a priori} i.i.d. -- we will use $\Ub$ as a re-parameterisation of $\Xb$ in Section \ref{sec_rippler}. 

\section{Bayesian Inference: The Rippler Algorithm} \label{sec_rippler}

\subsection{Overview} 

We now introduce the Rippler algorithm: a data-augmentation MCMC algorithm to perform inference for the transmission states $\Xb$ (and parameters $\thetab$), given partial test information $\Yb$, i.e.~to infer the joint posterior distribution $\pi(\Xb,\thetab|\Yb)$. We use a Metropolis-within-Gibbs algorithm to update $\thetab$ and $\Xb$ separately, alternating between draws from $\pi(\thetab|\Xb,\Yb)$ and $\pi(\Xb|\thetab,\Yb)$.

The Rippler algorithm takes a non-centred approach to latent variable updates. We materialise a matrix of random numbers $\Ub$ that, along with the parameters $\thetab$, could have produced the current latent variables $\Xb$. Then we make a change to $\Ub$ in some way; our default method is to change one element of the matrix and leave all the others the same, but we also explore other options. This change provides us with the matrix $\Ub^\ast$. We then convert back to the latent variables, giving us $\Xb^{\ast}$. Finally, we accept or reject $\Xb^\ast$ using a Metropolis-Hastings accept-reject step.

There are many known methods to update the parameters $\thetab$ from $\Xb$ and $\Yb$, such as random walk Metropolis, the Metropolis-adjusted Langevin algorithm, and Hamiltonian Monte Carlo. Here we focus on the latent variable update; the choice of the parameter update algorithm is left to the reader.

\subsection{Proposal Mechanism} \label{sec_rippler_sub_proposal}

The first step of the proposal is to materialise a matrix of random numbers $\Ub$ that, when combined with $\thetab$, could produce $\Xb$. To do this we will calculate the lower and upper bounds on the values of random numbers that could produce the latent variables; we collect these values in matrices $\Ub^{low} = (u^{low}_{t,j})_{t \in \{1,\dots,\Tc\}, j \in \{1,\dots,\Nc\}}$ and $\Ub^{upp} = (u^{upp}_{t,j})_{t \in \{1,\dots,\Tc\}, j \in \{1,\dots,\Nc\}}$ respectively. Full details are given in Section \ref{sec_supp_rippler_sub_rippler}. Note that since we already know the latent variables these calculations can be performed in parallel.

We then materialise each random number from the largest possible range of values that produce the current latent variables; we let $\Ub = (u_{t,j})_{t \in \{1,\dots,\Tc\}, j \in \{1,\dots,\Nc\}}$, where $u_{t,j} \sim \text{Unif}(u^{low}_{t,j},u^{upp}_{t,j})$.

Next we to make some change to $\Ub$ in order to propose $\Ub^\ast$. We do this by randomly choosing a pair $(t,j)$, changing $u_{t,j}^\ast$, and (in our default approach) keeping all other elements of $\Ub^\ast$ the same as in $\Ub$.

The range of random numbers that could produce $x_{t,j}$ is $(u^{low}_{t,j},u^{upp}_{t,j})$. If we propose $u^\ast_{t,j}$ from this range, we would have $x^\ast_{t,j}=x_{t,j}$, and so $\Xb^\ast=\Xb$. To avoid this `null move', we instead propose $u^\ast_{t,j}$ uniformly from $(0,u^{low}_{t,j}) \cup (u^{upp}_{t,j},1)$. Each pair $(t,j)$ is chosen with probability proportional to the length of the region that $u^\ast_{t,j}$ can be proposed from: $1 - u^{upp}_{t,j} + u^{low}_{t,j}$.

The last part of the proposal is to convert $\Ub^\ast$ back into latent variables $\Xb^{\ast}$. We do this using the non-centred simulation approach described in Section \ref{sec_model_sub_sim}, using the random numbers in $\Ub^{\ast}$ instead of drawing new random numbers. For the initial hidden states, we calculate \begin{equation}
    x^\ast_{1,j} = g(u^\ast_{1,j},\Sc,(\tilde{p}^{(j)}_{1},\dots,\tilde{p}^{(j)}_{\Sc})), \label{x_initial_func}
\end{equation} for every individual $j \in \{1,\dots,\Nc\}$. Then for $t \in \{1,\dots,\Tc-1\}$ we calculate \begin{equation}
    x^\ast_{t+1,j} = g(u^\ast_{t+1,j},\Sc,(p^{(t,j)}_{x^\ast_{t,j},1},\dots,p^{(t,j)}_{x^\ast_{t,j},\Sc})) \label{x_full_func}
\end{equation} for every individual $j \in \{1,\dots,\Nc\}$. 

Using $\Ub^{\ast}$ rather than $\Ub$ to generate the latent variables proposes new transmission states $\Xb^\ast$. In all proposed $\Xb^\ast$ we have $x^\ast_{t,j} \neq x_{t,j}$. However, by changing $x_{t,j}$ to $x^\ast_{t,j}$, the transition probabilities at time $t+1$ may be different under $\Xb^\ast$ than under $\Xb$. This could change $\boldsymbol{x}_{t+1}$, which means the transition probabilities at time $t+2$ may be different, and so on (the single change ``ripples'' through the latent variables). We perform a Metropolis-Hastings accept-reject step to determine if we will accept the proposed latent variables $\Xb^\ast$.

\subsection{Acceptance Probability} \label{sec_rippler_sub_acceptance}

The new latent variables $\Xb^\ast$ are accepted with probability \[\alpha(\Xb,\Xb^{\ast}) = \min\left\{1,\frac{\pi(\Xb^{\ast}|\boldsymbol{\thetab,\Yb})}{\pi(\Xb|\boldsymbol{\thetab,\Yb})}\frac{q(\Xb|\thetab,\Xb^{\ast})}{q(\Xb^{\ast}|\thetab,\Xb)}\right\}.\] Since we know $\pi(\Xb|\thetab,\Yb) \propto \pi(\thetab,\Xb|\Yb)$, we use Bayes' theorem to split the acceptance rate into its component terms: \[\alpha(\Xb,\Xb^{\ast}) = \min\left\{1,\frac{\pi(\Yb|\thetab,\Xb^\ast)\pi(\Xb^{\ast}|\thetab)}{\pi(\Yb|\thetab,\Xb)\pi(\Xb|\thetab)}\frac{q_1(\Ub^{\ast}|\thetab,\Xb^{\ast})q_2(\Ub|\thetab,\Ub^{\ast},\Xb^\ast)q_3(\Xb|\thetab,\Ub)}{q_1(\Ub|\thetab,\Xb)q_2(\Ub^{\ast}|\thetab,\Ub,\Xb)q_3(\Xb^{\ast}|\thetab,\Ub^{\ast})}\right\}.\] We simplify this expression using the following propositions.

\begin{proposition}
    We have $\pi(\Xb|\thetab)q_1(\Ub|\thetab,\Xb) = 1$ and $\pi(\Xb^{\ast}|\thetab)q_1(\Ub^{\ast}|\thetab,\Xb^{\ast}) = 1$. \label{prop1}
\end{proposition}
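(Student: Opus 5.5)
We prove the first identity $\pi(\Xb|\thetab)q_1(\Ub|\thetab,\Xb) = 1$; the second follows by the identical argument applied to $\Xb^\ast$ and $\Ub^\ast$. The idea is that both factors decompose over time-individual pairs $(t,j)$, and that each factor of $q_1$ is exactly the reciprocal of the matching factor of $\pi(\Xb|\thetab)$. The reason is that $(u^{low}_{t,j},u^{upp}_{t,j})$ is precisely the interval of $u$ values that the map $g$ sends to the observed state $x_{t,j}$.

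\textbf{Factorising the prior.} By the CHMM structure of Section \ref{sec_model},
\[ \pi(\Xb|\thetab) = \prod_{j=1}^{\Nc} \tilde{p}^{(j)}_{x_{1,j}} \prod_{t=1}^{\Tc-1}\prod_{j=1}^{\Nc} p^{(t,j)}_{x_{t,j},x_{t+1,j}}, \]
since, given $\boldsymbol{x}_t$ and $\thetab$, the individuals transition independently at each time-point.

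\textbf{Factorising $q_1$.} Conditional on $\Xb$ and $\thetab$, the bounds $u^{low}_{t,j},u^{upp}_{t,j}$ are deterministic. By the definition of $g$ they are
\[ u^{low}_{t+1,j}=\sum_{r=1}^{x_{t+1,j}-1} p^{(t,j)}_{x_{t,j},r}, \qquad u^{upp}_{t+1,j}=\sum_{r=1}^{x_{t+1,j}} p^{(t,j)}_{x_{t,j},r}, \]
and analogously with $\tilde{p}^{(j)}$ at $t=1$. The transition probabilities $p^{(t,j)}$ depend only on $\boldsymbol{x}_t$ and $\thetab$, so they are all fixed once $\Xb$ is given; this is what allows the bounds to be computed in parallel. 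Each interval therefore has length
\[ u^{upp}_{t+1,j}-u^{low}_{t+1,j}=p^{(t,j)}_{x_{t,j},x_{t+1,j}}, \qquad u^{upp}_{1,j}-u^{low}_{1,j}=\tilde{p}^{(j)}_{x_{1,j}}. \]
The $u_{t,j}$ are drawn independently and uniformly on these intervals. Hence, on its support,
\[ q_1(\Ub|\thetab,\Xb)=\prod_{t,j}\frac{1}{u^{upp}_{t,j}-u^{low}_{t,j}}, \]
which is exactly the reciprocal of the product above. Multiplying the two gives $1$.

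\textbf{Main obstacle.} The delicate step is justifying that $q_1$ is a genuine density with these factors. Two points need checking.

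\emph{Nonzero interval lengths.} The current $\Xb$ has positive prior probability, being reachable by the chain, so every interval has positive length.

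\emph{Support and boundaries.} The identity is stated for $\Ub$ in the support of $q_1$, that is, for $\Ub$ consistent with $\Xb$. For such $\Ub$, the forward map defined by \eqref{x_initial_func} and \eqref{x_full_func} reproduces $\Xb$ exactly. The interval boundaries are a Lebesgue-null set, so the strict inequalities in $g$ do not matter.

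For the starred version, $\Ub^\ast$ is by construction consistent with $\Xb^\ast$: we set $\Xb^\ast$ to be the forward image of $\Ub^\ast$. Each $u^\ast_{t,j}$ therefore lies in the interval determined by $\Xb^\ast$, and the same computation applies with $\Xb^\ast$-dependent transition probabilities in place of those for $\Xb$.
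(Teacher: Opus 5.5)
Your proof is correct and follows essentially the same route as the paper: you factorise $\pi(\Xb|\thetab)$ into initial-state and transition probabilities, and observe that each uniform interval $(u^{low}_{t,j},u^{upp}_{t,j})$ has length equal to the matching probability, so $q_1$ is the product of their reciprocals; the starred case then follows by the same argument applied to $(\Xb^\ast,\Ub^\ast)$. Your extra remarks (positive interval lengths, null boundaries, and why $\Ub^\ast$ lies in the intervals determined by $\Xb^\ast$) spell out points the paper leaves implicit under ``by symmetry''.
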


\textit{Proof.} See Section \ref{sec_supp_rippler_sub_rippler}.

\begin{proposition}
    We have $q_3(\Xb^{\ast}|\thetab,\Ub^\ast) = 1$ and $q_3(\Xb|\thetab,\Ub) = 1$. \label{prop2}
\end{proposition}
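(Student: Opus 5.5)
The plan is to show that $q_3(\cdot\,|\,\thetab,\Ub)$ is a point mass. Given $\thetab$ and a full matrix of random numbers, the map to latent variables is a deterministic function $\Xb = h_{\thetab}(\Ub)$, defined by the non-centred recursion of Section \ref{sec_model_sub_sim}, i.e.\ equations \eqref{x_initial_func} and \eqref{x_full_func}. Hence $q_3(\Xb'\,|\,\thetab,\Ub)=\mathbb{1}\{\Xb' = h_{\thetab}(\Ub)\}$, and each claimed identity reduces to checking that the relevant $\Xb$ is the image of the relevant $\Ub$ under $h_{\thetab}$.

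The first step is to verify that $h_{\thetab}$ is well defined and deterministic. I would argue by induction on $t$.
\begin{itemize}
\item At $t=1$, $x_{1,j}=g(u_{1,j},\Sc,(\tilde p^{(j)}_1,\dots,\tilde p^{(j)}_\Sc))$ depends only on $u_{1,j}$ and $\thetab$ (through the prior probabilities).
\item If $\boldsymbol{x}_t$ is a deterministic function of $(u_{s,\cdot})_{s\le t}$ and $\thetab$, then so are the transition probabilities $p^{(t,j)}_{x_{t,j},\cdot}$, which are computed from $\boldsymbol{Q}^{(t,j)}$ and hence from $\boldsymbol{x}_t$ and $\thetab$.
\item Therefore $x_{t+1,j}=g(u_{t+1,j},\dots)$ is also deterministic.
\end{itemize}
Since $g$ is a fixed function, no further randomness enters. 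The ties $u$ equal to a cumulative sum have probability zero under the continuous uniform draws, so they can be ignored or assigned by a convention.

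The second step handles the forward direction, $q_3(\Xb^\ast|\thetab,\Ub^\ast)=1$. By construction in Section \ref{sec_rippler_sub_proposal}, $\Xb^\ast$ is defined as exactly $h_{\thetab}(\Ub^\ast)$ via \eqref{x_initial_func}--\eqref{x_full_func}. The point mass therefore sits at $\Xb^\ast$, and the density equals $1$.

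The third step handles the reverse direction, $q_3(\Xb|\thetab,\Ub)=1$, and I expect it to be the main point needing care. Here I must show that $h_{\thetab}(\Ub)=\Xb$, where each $u_{t,j}$ was drawn from $\text{Unif}(u^{low}_{t,j},u^{upp}_{t,j})$. I would use the defining property of these bounds from Section \ref{sec_supp_rippler_sub_rippler}: they are the cumulative sums of the transition probabilities evaluated at the current $\Xb$. That is, $u^{low}_{t,j}=\sum_{r<x_{t,j}}p^{(t-1,j)}_{x_{t-1,j},r}$ and $u^{upp}_{t,j}=\sum_{r\le x_{t,j}}p^{(t-1,j)}_{x_{t-1,j},r}$, with the prior probabilities used at $t=1$.

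The argument is then an induction over $t$. Suppose $h_{\thetab}(\Ub)$ agrees with $\Xb$ up to time $t$. Then the transition probabilities used by the recursion at step $t+1$ coincide with those used to compute the bounds. Since $u_{t+1,j}$ lies strictly inside the corresponding interval, the definition of $g$ returns $x_{t+1,j}$, which completes the induction. Consequently $q_3$ places all its mass on $\Xb$, and both identities in Proposition \ref{prop2} hold.
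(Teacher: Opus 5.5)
Your proposal is correct and matches the paper's argument: $q_3$ is a point mass because $\Xb^\ast$ is computed deterministically from $\Ub^\ast$ and $\thetab$ via equations \eqref{x_initial_func}--\eqref{x_full_func}. Your inductive check of the reverse direction, that the materialised $\Ub$ (drawn inside the bounds computed from $\Xb$) maps back to exactly $\Xb$, spells out what the paper covers only with ``by symmetry''.
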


\textit{Proof.} We propose $\Xb^\ast$ deterministically from $\Ub^\ast$ and $\thetab$. This means that any given $\Ub^\ast$ will always correspond to the same $\Xb^\ast$, and so $q_3(\Xb^{\ast}|\thetab,\Ub^\ast) = 1$. By symmetry, the same result follows for the reverse move. \begin{flushright} $\Box$ \end{flushright}

This means the acceptance rate simplifies to \[\alpha(\Xb,\Xb^{\ast}) = \min\left\{1,\frac{\pi(\Yb|\thetab,\Xb^\ast)}{\pi(\Yb|\thetab,\Xb)}\frac{q_2(\Ub|\thetab,\Ub^{\ast},\Xb^\ast)}{q_2(\Ub^{\ast}|\thetab,\Ub,\Xb)}\right\},\] where \[ \pi(\Yb|\thetab,\Xb^\ast) = \prod_{t=1}^\Tc \prod_{j=1}^\Nc f(y_{t,j}|x_{t,j}^\ast,\thetab),\] and \[ q_2(\Ub^{\ast}|\thetab,\Ub,\Xb) = \frac{1}{\sum_{t=1}^\Tc\sum_{j=1}^\Nc(1 - u^{upp}_{t,j} + u^{low}_{t,j})}.\]

\subsection{Adaptive Tuning} \label{sec_rippler_sub_tuning}

In the second step of the Rippler proposal we make some change to $\Ub$ in order to produce $\Ub^\ast$. By default, we change one element of $\Ub$ and keep all other elements of $\Ub^\ast$ the same as in $\Ub$ (see Section \ref{sec_rippler_sub_proposal}). We can, however, change an arbitrary $\kappa$ elements of $\Ub$, where $\kappa$ is some positive integer.

We wish to set $\kappa$ to maximise the mixing of the Markov chain. However, as we do not know the optimal value before running the algorithm, we take an adaptive approach. Let $a_\kappa$ be the acceptance rate of Rippler updates when changing $\kappa$ elements of $\Ub$. Our adaptive tuning algorithm takes an $\epsilon$-greedy approach: we take exploration steps (choosing a random $\kappa$ up to some $\kappa_{max}$) with some probability $\epsilon$, and take exploitation steps (choosing $\kappa$ such that $a_\kappa$ is as close as possible to a target acceptance rate $a'$) with probability $1-\epsilon$.

More formally, with probability $\epsilon$, we choose $\kappa$ uniformly from $\{1,\dots,\kappa_{max}\}$. Otherwise, let \[ \kappa = \arg \min \{ |a_{\kappa^\ast}-a'| : \kappa^\ast \in \{1,\dots,\kappa_{max}\}\}, \] where $a'$ is the target acceptance rate. We choose $a'=0.234$ as suggested in \cite{lee2018optimal}, a target which also closely matches our empirical results (see Section \ref{sec_supp_sim_sub_SIR}).


\subsection{Summary}

The full algorithm is presented below in Algorithm \ref{alg_rippler}. We alternate between parameter and latent variable updates. Since the latent variable space is so large, we update the latent variables multiple times per iteration.

\begin{algorithm}[H]
\caption{Rippler Inference}
\label{alg_rippler}
\begin{algorithmic}[1]
\Require Number of iterations $\Kc$, number of latent updates per iteration $\Kc'$, starting parameters $\thetab^{(0)}$, starting latent variables $\Xb^{(0)}$, observation data $\Yb$.
\State Let $\thetab=\thetab^{(0)}$ and $\Xb=\Xb^{(0)}$.
\For{$k \in \{1,\dots,\Kc\}$}
    \State Update $\thetab \sim \pi(\thetab|\Xb,\Yb)$ using a suitable MCMC update.
    \For{$k' \in \{1,\dots,\Kc'\}$}
        \State Calculate $\Ub^{low}$ and $\Ub^{upp}$ from $\thetab$ and $\Xb$ using equations \ref{u_bounds_initial} and \ref{u_bounds_full}.
        \State Let $\Ub = (u_{t,j})_{t \in \{1,\dots,\Tc\}, j \in \{1,\dots,\Nc\}}$, where $u_{t,j} \sim \text{Unif}(u^{low}_{t,j},u^{upp}_{t,j})$.
        \State Propose $\Ub^{\ast}$ from $\Ub$, $\Ub^{low}$, and $\Ub^{upp}$ (using adaptive tuning).
        \State Calculate $\Xb^{\ast}$ from $\thetab$ and $\Ub^{\ast}$ using equations \ref{x_initial_func} and \ref{x_full_func}.
        \State Calculate \[\alpha(\Xb,\Xb^{\ast}) = \min\left\{1,\frac{\pi(\Yb|\thetab,\Xb^\ast)}{\pi(\Yb|\thetab,\Xb)}\frac{q_2(\Ub|\thetab,\Ub^{\ast},\Xb^\ast)}{q_2(\Ub^{\ast}|\thetab,\Ub,\Xb)}\right\}.\]
        \State Let $\Xb = \Xb^{\ast}$ with probability $\alpha(\Xb,\Xb^{\ast})$.
    \EndFor
    \State Set $\thetab^{(k)}=\thetab$ and $\Xb^{(k)}=\Xb$.
\EndFor
\State \Return Parameter samples $\thetab^{(0)},\thetab^{(1)},\dots,\thetab^{(K)}$, latent variable samples $\Xb^{(0)},\Xb^{(1)},\dots,\Xb^{(K)}$.
\end{algorithmic}
\end{algorithm}

\subsection{Data-informed Rippler Algorithm} \label{sec_rippler_sub_hybrid}

In the standard Rippler algorithm, we use both initial state probabilities $ \tilde{p}^{(j)}_{s} = \Pbb(x_{1,j}=s|\thetab)$ and transition probabilities $p^{(t,j)}_{r,s} = \Pbb(x_{t+1,j}=s|x_{t,j}=r,\boldsymbol{x}_{t,-j},\thetab)$ to determine $\Ub^{low}$ and $\Ub^{upp}$ from $\Xb$, and to determine $\Xb^\ast$ from $\Ub^\ast$. We now present an alternative version of the Rippler algorithm that uses the observation data in the proposal step; we call this the data-informed Rippler algorithm. In this method, we use modified probabilities $\tilde{\rho}^{(j)}_{s} = \Pbb(x_{1,j}=s|y_{1,j},\thetab)$ and $\rho^{(t,j)}_{r,s} = \Pbb(x_{t+1,j}=s|x_{t,j}=r,\boldsymbol{x}_{t,-j},y_{t+1,j},\thetab)$ to replace $\tilde{p}^{(j)}_s$ and $p^{(t,j)}_{r,s}$ respectively. Bayes' theorem is used to calculate the modifications to the probabilities; we have \[ \tilde{\rho}^{(j)}_{s} = \frac{\tilde{p}^{(j)}_{s}f(y_{1,j}|s,\thetab)}{\sum_{s'=1}^\Sc \tilde{p}^{(j)}_{s'}f(y_{1,j}|s',\thetab)} \] and \[ \rho^{(t,j)}_{r,s} = \frac{p^{(t,j)}_{r,s}f(y_{t+1,j}|s,\thetab)}{\sum_{s'=1}^\Sc p^{(t,j)}_{r,s'}f(y_{t+1,j}|s',\thetab)}. \] The full calculations are presented in Section \ref{sec_supp_rippler_sub_hybrid}. When determining $\Xb^\ast$ from $\Ub^\ast$, for the initial hidden states we calculate \begin{equation}
    x^\ast_{1,j} = g(u^\ast_{1,j},\Sc,(\tilde{\rho}^{(j)}_{1},\dots,\tilde{\rho}^{(j)}_{\Sc})) \label{x_initial_func_hybrid}
\end{equation} for every individual $j \in \{1,\dots,\Nc\}$. Then for $t \in \{1,\dots,\Tc-1\}$ we calculate \begin{equation}
    x^\ast_{t+1,j} = g(u^\ast_{t+1,j},\Sc,(\rho^{(t,j)}_{x^\ast_{t,j},1},\dots,\rho^{(t,j)}_{x^\ast_{t,j},\Sc})) \label{x_full_func_hybrid}
\end{equation} for every individual $j \in \{1,\dots,\Nc\}$. The modifications to determining $\Ub^{low}$ and $\Ub^{upp}$ from $\Xb$ are detailed in Section \ref{sec_supp_rippler_sub_hybrid}.

Using the modified initial and transmission probabilities also changes the acceptance probability in the Metropolis-Hastings accept-reject step. Let $\tilde{c}^{(j)} = \sum_{s'=1}^\Sc \tilde{p}^{(j)}_{s'}f(y_{1,j}|s',\thetab)$ and $c^{(t,j)}_r = \sum_{s'=1}^\Sc p^{(t,j)}_{r,s'}f(y_{t+1,j}|s',\thetab)$ be the normalising constants when modifying the transmission probabilities. Note that the data-informed Rippler algorithm relies on an assumption that we always have $\tilde{c}^{(j)} \neq 0$ and $c^{(t,j)}_r \neq 0$.  We simplify the acceptance probability using the following proposition (and Proposition \ref{prop2}).

\begin{proposition}
    We have \[ \pi(\boldsymbol{Y|\thetab,\Xb})\pi(\Xb|\thetab)q_1(\Ub|\thetab,\Xb) = \prod_{j=1}^\Nc\left(\tilde{c}^{(j)}\prod_{t=1}^{\Tc-1}c^{(t,j)}_{x_{t,j}}\right) \] and \[ \pi(\boldsymbol{Y|\thetab,\Xb^\ast})\pi(\Xb^{\ast}|\thetab)q_1(\Ub^{\ast}|\thetab,\Xb^{\ast}) = \prod_{j=1}^\Nc\left(\tilde{c}^{(j)}\prod_{t=1}^{\Tc-1}c^{(t,j)}_{x^\ast_{t,j}}\right). \] \label{prop3}
\end{proposition}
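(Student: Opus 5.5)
The plan is to compute each of the three factors explicitly as products over $(t,j)$ and then cancel term by term. Here $q_1(\Ub|\thetab,\Xb)$ is the density of the materialised $\Ub$. In the data-informed algorithm the bounds $u^{low}_{t,j}$ and $u^{upp}_{t,j}$ are computed from the modified probabilities. Since $g$ partitions $(0,1)$ into consecutive intervals of lengths equal to the probabilities supplied, the interval producing $x_{1,j}$ has length $u^{upp}_{1,j}-u^{low}_{1,j}=\tilde{\rho}^{(j)}_{x_{1,j}}$. Likewise, for $t\ge 1$ the interval producing $x_{t+1,j}$ has length $\rho^{(t,j)}_{x_{t,j},x_{t+1,j}}$. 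This uses the same bound construction as in the proof of Proposition \ref{prop1}, with $\tilde{p},p$ replaced by $\tilde{\rho},\rho$.

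Because each $u_{t,j}$ is drawn independently and uniformly on its interval, we get
\[
q_1(\Ub|\thetab,\Xb)=\prod_{j=1}^\Nc\left(\frac{1}{\tilde{\rho}^{(j)}_{x_{1,j}}}\prod_{t=1}^{\Tc-1}\frac{1}{\rho^{(t,j)}_{x_{t,j},x_{t+1,j}}}\right).
\]

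Next I write the other two factors in the same indexing. The CHMM structure gives the prior as
\[
\pi(\Xb|\thetab)=\prod_{j=1}^\Nc\left(\tilde{p}^{(j)}_{x_{1,j}}\prod_{t=1}^{\Tc-1}p^{(t,j)}_{x_{t,j},x_{t+1,j}}\right),
\]
where $p^{(t,j)}$ depends on $\boldsymbol{x}_{t,-j}$ but is a fixed number given $\Xb$. Conditional independence of the observations gives
\[
\pi(\Yb|\thetab,\Xb)=\prod_{j}f(y_{1,j}|x_{1,j},\thetab)\prod_{t=1}^{\Tc-1}f(y_{t+1,j}|x_{t+1,j},\thetab).
\]
The observation at time $1$ pairs with the initial-state factor, and the observation at time $t+1$ pairs with the transition from $t$ to $t+1$.

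Multiplying these together, each $(1,j)$ factor becomes $\tilde{p}^{(j)}_{x_{1,j}}f(y_{1,j}|x_{1,j},\thetab)/\tilde{\rho}^{(j)}_{x_{1,j}}$. By the Bayes formula defining $\tilde{\rho}$, this equals $\tilde{c}^{(j)}$. Similarly, each $(t,j)$ transition factor becomes $p^{(t,j)}_{x_{t,j},x_{t+1,j}}f(y_{t+1,j}|x_{t+1,j},\thetab)/\rho^{(t,j)}_{x_{t,j},x_{t+1,j}}=c^{(t,j)}_{x_{t,j}}$. Taking the product over $j$ and $t$ gives the first identity. The second identity follows by the identical argument applied to $\Xb^\ast$ and $\Ub^\ast$. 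This works because $\Ub^\ast$ lies in the rectangle of intervals determined by $\Xb^\ast$ under the modified probabilities, as $\Xb^\ast$ is generated from $\Ub^\ast$ via equations \ref{x_initial_func_hybrid} and \ref{x_full_func_hybrid}. So $q_1(\Ub^\ast|\thetab,\Xb^\ast)$ has the same form. Note that the normalising constants $c^{(t,j)}_r$ are evaluated at the current state $x^\ast_{t,j}$ and the neighbours $\boldsymbol{x}^\ast_{t,-j}$.

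The main obstacle is the division step. It requires $\tilde{\rho}^{(j)}_{x_{1,j}}>0$ and $\rho^{(t,j)}_{x_{t,j},x_{t+1,j}}>0$ for every state actually visited, and this is where the assumption $\tilde{c}^{(j)},c^{(t,j)}_r\neq 0$ enters. I would argue that any $\Xb$ reachable by the sampler has positive prior and likelihood, so every visited factor $p\cdot f$ is positive. Hence the corresponding $\rho$ is positive, the intervals have positive length, and the density $q_1$ is well defined. I would also make explicit that a $\rho^{(t,j)}$ depending on $\boldsymbol{x}_{t,-j}$ causes no difficulty: all probabilities are evaluated along the fixed trajectory $\Xb$ (respectively $\Xb^\ast$), so the product factorises exactly as in the prior.
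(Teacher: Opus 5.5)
Your proposal is correct and is essentially the paper's proof: you write $q_1(\Ub|\thetab,\Xb)$ as the product of reciprocal interval lengths $1/\tilde{\rho}^{(j)}_{x_{1,j}}$ and $1/\rho^{(t,j)}_{x_{t,j},x_{t+1,j}}$, rewrite these by the Bayes formula as $\tilde{c}^{(j)}/(\tilde{p}^{(j)}_{x_{1,j}}f(y_{1,j}|x_{1,j},\thetab))$ and $c^{(t,j)}_{x_{t,j}}/(p^{(t,j)}_{x_{t,j},x_{t+1,j}}f(y_{t+1,j}|x_{t+1,j},\thetab))$, cancel against the prior and likelihood factors, and handle the reverse move by the same argument. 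You also note that $c^{(t,j)}$ must be evaluated with the neighbours $\boldsymbol{x}^\ast_{t,-j}$ in the second identity, and that the visited $\rho$'s must be strictly positive for $q_1$ to be a well-defined density; both are useful clarifications that the paper leaves implicit under ``by symmetry''.
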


\textit{Proof.} See Section \ref{sec_supp_rippler_sub_hybrid}.

This means the acceptance rate simplifies to \[ \alpha(\Xb,\Xb^{\ast}) = \min\left\{1,\frac{\prod_{j=1}^\Nc\prod_{t=1}^{\Tc-1}c^{(t,j)}_{x_{t,j}}}{\prod_{j=1}^\Nc\prod_{t=1}^{\Tc-1}c^{(t,j)}_{x^\ast_{t,j}}}\frac{q_2(\Ub|\thetab,\Ub^{\ast},\Xb^\ast)}{q_2(\Ub^{\ast}|\thetab,\Ub,\Xb)}\right\}. \] Note that we have made no changes to the proposal of $\Ub^\ast$ from $\Ub$ step, thus $q_2(\Ub^{\ast}|\thetab,\Ub,\Xb)$ and $q_2(\Ub|\thetab,\Ub^\ast,\Xb^\ast)$ are unchanged. The full algorithm is presented below in Algorithm \ref{alg_rippler_data_informed}.

\begin{algorithm}[H]
\caption{Data-informed Rippler Inference}
\label{alg_rippler_data_informed}
\begin{algorithmic}[1]
\Require Number of iterations $\Kc$, number of latent updates per iteration $\Kc'$, starting parameters $\thetab^{(0)}$, starting latent variables $\Xb^{(0)}$, observation data $\Yb$.
\State Let $\thetab=\thetab^{(0)}$ and $\Xb=\Xb^{(0)}$.
\For{$k \in \{1,\dots,\Kc\}$}
    \State Update $\thetab \sim \pi(\thetab|\Xb,\Yb)$ using a suitable MCMC update.
    \For{$k' \in \{1,\dots,\Kc'\}$}
        \State Calculate $\Ub^{low}$ and $\Ub^{upp}$ from $\thetab$ and $\Xb$ using equations \ref{u_bounds_initial_hybrid} and \ref{u_bounds_full_hybrid}.
        \State Let $\Ub = (u_{t,j})_{t \in \{1,\dots,\Tc\}, j \in \{1,\dots,\Nc\}}$, where $u_{t,j} \sim \text{Unif}(u^{low}_{t,j},u^{upp}_{t,j})$.
        \State Propose $\Ub^{\ast}$ from $\Ub$, $\Ub^{low}$, and $\Ub^{upp}$ (using adaptive tuning).
        \State Calculate $\Xb^{\ast}$ from $\thetab$ and $\Ub^{\ast}$ using equations \ref{x_initial_func_hybrid} and \ref{x_full_func_hybrid}.
        \State Calculate \[ \alpha(\Xb,\Xb^{\ast}) = \min\left\{1,\frac{\prod_{j=1}^\Nc\prod_{t=1}^{\Tc-1}c^{(t,j)}_{x_{t,j}}}{\prod_{j=1}^\Nc\prod_{t=1}^{\Tc-1}c^{(t,j)}_{x^\ast_{t,j}}}\frac{q_2(\Ub|\thetab,\Ub^{\ast},\Xb^\ast)}{q_2(\Ub^{\ast}|\thetab,\Ub,\Xb)}\right\}. \]
        \State Let $\Xb = \Xb^{\ast}$ with probability $\alpha(\Xb,\Xb^{\ast})$.
    \EndFor
    \State Set $\thetab^{(k)}=\thetab$ and $\Xb^{(k)}=\Xb$.
\EndFor
\State \Return Parameter samples $\thetab^{(0)},\thetab^{(1)},\dots,\thetab^{(K)}$, latent variable samples $\Xb^{(0)},\Xb^{(1)},\dots,\Xb^{(K)}$.
\end{algorithmic}
\end{algorithm}

\section{Bayesian Inference: Existing Methods} \label{sec_comp}



\subsection{Existing Inference Methods}

There are multiple existing methods for Bayesian inference on models of individual-based infectious disease transmission. The most common method for this is reversible-jump MCMC (RJMCMC), originally proposed in \cite{gibson1998estimating} and \cite{o1999bayesian}. Each RJMCMC update chooses an individual at random and proposes a change to their latent variables: either moving the time of a transition event, adding a transition event, or removing a transition event. The proposed change is then accepted or reject in a Metropolis-Hastings step. 

This method deals well with so-called `linear' models, where individuals can only ever transition into one other state and eventually reach a final absorbing state (e.g., the SIR model). However, many models are not so simple, and may contain at least one of two complicating factors: loops and branches. In a model with loops, individuals can re-enter states they have previous exited, meaning that state transitions can occur multiple times within the course of the epidemic \citep{spencer2015super}. In a model with branches, individuals may have more than one possible state they can move to when exiting a state \citep{chapman2020inferring}. RJMCMC struggles with these models: the latent variables (epidemic space) mix slowly and the algorithm can become very complicated/difficult to implement. By contrast, loops and branches do not pose an issue to the Rippler algorithm (which is independent of the model structure).

Alternatively, the state of the art for inference on coupled hidden Markov models is individual forward filtering backwards sampling (iFFBS), originally proposed in \cite{touloupou2020scalable}. This is an extension of the general forward filtering backwards sampling (FFBS) algorithm developed in \cite{carter1994gibbs} and \cite{chib1996calculating}. The FFBS algorithm first consists of iteratively calculating the probabilities of all possible latent variables at each time-point (given only the observations up to that time-point). After this, we use these probabilities to sample iteratively but backwards in time. This results in a sample from the full conditional distribution of the latent variables $\Xb$ given the parameters $\thetab$ and observations $\Yb$. However, in an individual-based model, the latent variables have $\Sc^\Nc$ states at any given time-point, and thus FFBS is computationally infeasible.

The iFFBS algorithm avoids this problem by choosing an individual at random and performing a FFBS update on that individual only -- the latent variables only have $S$ states at any given time-point. For all other individuals their latent variables are kept constant, with the forward filtering probabilities in the iFFBS update being modified from FFBS to reflect this. As such we expect the iFFBS algorithm to perform well at identifying within-individual correlation, but weaker at identifying between-individual correlation (where we believe the Rippler algorithm represents an improvement).


\subsection{Computational Complexity Comparison} \label{sec_comp_sub_scaling}

Direct comparison between the Rippler algorithm, RJMCMC, and iFFBS is difficult. One simple definition for the ``best'' algorithm is measuring how well the latent variable Markov chains mix per unit time of computation. However, even this seemingly basic definition proves complex, as there is no standard for the mixing of latent variable Markov chains for coupled hidden Markov models. One option is the autocorrelation function of some summary statistic \citep{touloupou2020scalable}, but the choice of this statistic may not be clear, or there may not be a summary statistic that adequately captures all latent variables. Another option is measuring the distance between iterations for each individual-time-point (e.g., mean absolute jump distance, root mean squared jump distance), but defining the distance between states is not always straightforward.

We avoid direct comparison of the time taken to run the algorithms, since this is affected by the code implementation. Instead we investigate the \textit{computational complexity scaling} as the number of states in the model $\Sc$, number of individuals $\Nc$, and number of time-points $\Tc$ change. Note that in \cite{touloupou2020scalable} the iFFBS algorithm performs $\Nc$ latent variable updates per parameter update (one per individual) -- we change this an arbitrary number of latent variable updates per parameter update, choosing individuals to update randomly. This ensures a fair comparison with the RJMCMC and Rippler algorithms, which also perform latent variable updates a fixed number of times per parameter update (enough to ensure sufficient mixing).

In the accept-reject step of an RJMCMC update, we calculate the probability of moving from state $x_{t-1,j}$ to $x_{t,j}$ for each $t \in \{2,\dots,\Tc\}$ and $j \in \{1,\dots,\Nc\}$. Therefore the computational complexity of each RJMCMC update is $O(\Nc\Tc)$. During the proposal step of a Rippler update (moving from random numbers back to the latent variables), we calculate the probability of moving from state $x_{t-1,j}$ to every $s \in \{1,\dots,\Sc\}$ for each $t \in \{2,\dots,\Tc\}$ and $j \in \{1,\dots,\Nc\}$. Therefore the computational complexity of each Rippler update is $O(\Sc\Nc\Tc)$. When calculating the forward probabilities in an iFFBS update (for some individual $i$), we calculate the probability of moving from state $r \in \{1,\dots,\Sc\}$ to $s \in \{1,\dots,\Sc\}$ for each $t \in \{2,\dots,\Tc\}$, and we modify these probabilities with an additional product for each $j \in \{1,\dots,\Nc\}\backslash \{i\}$. Therefore the computational complexity of each iFFBS update is $O(\Sc^2\Nc\Tc)$. 

\section{Simulation Studies} \label{sec_sim}

\subsection{Overview}

In this section, we conduct multiple simulation studies to assess the performance of the Rippler algorithms for updating the transmission states, and compare it with the performance of existing methods. In Section \ref{sec_sim_sub_SIR}, we test the convergence of the methods on the general stochastic SIR epidemic. In Sections \ref{sec_sim_sub_SEIR} and \ref{sec_sim_sub_SIS}, we test the methods on more complicated models and compare how they perform as the number of states changes. Additional tests, such as how the Rippler algorithm performs as we alter the number of random number changes in each update, are performed in Section \ref{sec_supp_sim}.

\subsection{SIR Model} \label{sec_sim_sub_SIR}

The first simulation study uses the SIR model (the general stochastic epidemic \citep{held2019handbook}). This model partitions the population of interest into three mutually-exclusive states: susceptive, infective, and recovered. Consider a population of $\Nc$ individuals over $\Tc$ time-points. Let $x_{t,j} \in \{1,2,3\}$ be the current state of individual $j \in \{1,\dots,\Nc\}$ at time-point $t \in \{1,\dots,\Tc\}$, where $x_{t,j}=1$ if they are susceptible, $x_{t,j}=2$ if they are infective, and $x_{t,j}=3$ if they are recovered. State transitions occur from susceptible to infective at rate $\beta$ for each currently infective individual, and from infective to recovered at constant rate $\gamma$. More specifically, we have transition probabilities \[ \boldsymbol{P}^{(t,j)} = \begin{pmatrix}
    e^{-\beta I(t)} & 1-e^{-\beta I(t)} & 0
    \\ 0 & e^{-\gamma} & 1-e^{-\gamma}
    \\ 0 & 0 & 1
\end{pmatrix}\] for any $t \in \{1,\dots,\Tc-1\}$ and $j \in \{1,\dots,\Nc\}$, where $I(t) = \sum_{j=1}^\Nc \mathbf{1}\{x_{t,j}=2\}$ (the total number of infective individuals at time $t$). A visualisation of the model is shown in Figure \ref{fig_sir_visual}.

\begin{figure}[H]
    \centering
    \begin{tikzpicture}[thick, scale=1.4, every node/.style={scale=1.3}]
        \node[thick, draw, rectangle, minimum size=1cm, fill=tab_blue, text=white] (S) at (0,0) {S};
        \node[thick, draw, rectangle, minimum size=1cm, fill=tab_red, text=white] (I) at (3,0) {I};
        \node[thick, draw, rectangle, minimum size=1cm, fill=tab_green, text=white] (R) at (6,0) {R};
        \draw[->,thick] (S)--(I);
        \draw[->,thick] (I)--(R);
        \node (beta) at (1.5,0.25) {$\beta I(t)$};
        \node (gamma) at (4.5,0.25) {$\gamma$};
    \end{tikzpicture}
    \caption{Visualisation of the SIR model (where $\beta$ is the infection rate, $\gamma$ is the recovery rate, and $I(t)$ is the current number of infective individuals).}
    \label{fig_sir_visual}
\end{figure}
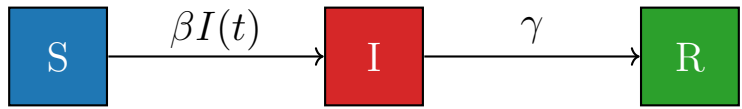

The process is observed by a series of diagnostic tests at prespecified times (each individual $j$ is tested at each time-point $t$ with probability 0.1). Let $y_{t,j} \in \{0,1,\text{NA}\}$ be the test result of individual $j \in \{1,\dots,\Nc\}$ at time-point $t \in \{1,\dots,\Tc\}$, where $y_{t,j}=1$ if they test infective, $y_{t,j}=0$ if they test not infective, and $y_{t,j}=\text{NA}$ if there is no test. Let the test sensitivity and specificity be $s_e$ and $s_p$ respectively; we simulate the result $y_{t,j}$ by sampling \[
    y_{t,j} | x_{t,j} \sim \begin{cases} \text{Bernoulli}(s_e) & \text{if } x_{t,j}=2, \\ \text{Bernoulli}(1-s_p) & \text{if } x_{t,j} \neq 2. \end{cases}
\]

We consider a simulated dataset of $\Nc =100$ individuals over $\Tc=50$ time-points. We set $\beta=1/80$, $\gamma=1/10$, $s_e=0.9$, and $s_p=0.9$. One individual is initially infective, with all others initially susceptible. The simulated hidden transmission process and observation data are shown in Figure \ref{fig_sir_sim}.

We run the MCMC algorithm for $\Kc=10{,}000$ iterations for each method, updating the latent variables 10 times per iteration. The model parameters were kept constant at their true values (in order to better compare the latent variable update methods). Posterior samples of the number of individuals in each state over time using each method (both medians and 95\% credible intervals) are shown in Figure \ref{fig_sir_credible_intervals}. We see that the posterior samples' 95\% credible intervals follow the true values closely; all four methods produce almost identical results.

\begin{figure}[H]
     \centering
     \begin{subfigure}[h]{0.49\textwidth}
         \centering
         \includegraphics[width=\textwidth]{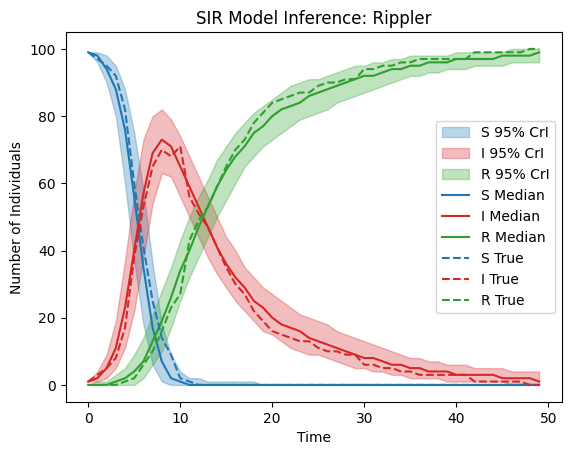}
     \end{subfigure}
     \hfill
     \begin{subfigure}[h]{0.49\textwidth}
         \centering
         \includegraphics[width=\textwidth]{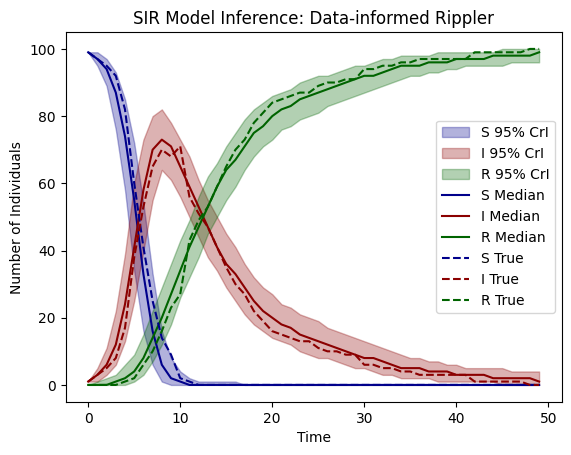}
     \end{subfigure}
     \\ 
     \begin{subfigure}[h]{0.49\textwidth}
         \centering
         \includegraphics[width=\textwidth]{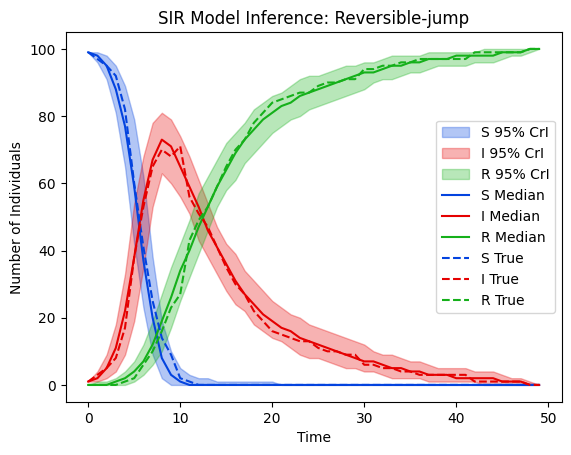}
     \end{subfigure}
     \hfill
     \begin{subfigure}[h]{0.49\textwidth}
         \centering
         \includegraphics[width=\textwidth]{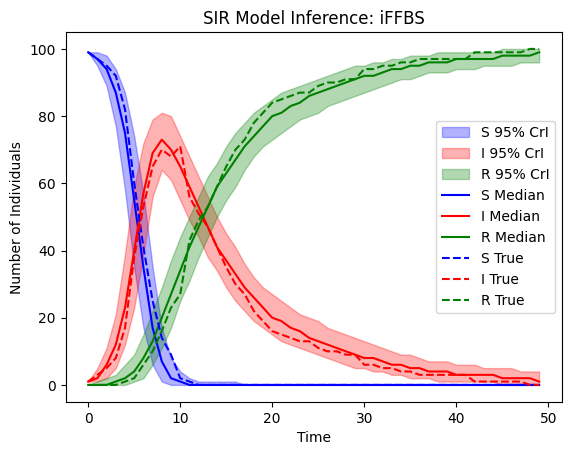}
     \end{subfigure}
     \caption{Posterior samples of the number of individuals in each state over time, using the Rippler (top left), data-informed Rippler (top right), reversible-jump (bottom left), and iFFBS (bottom right) methods on the SIR model. The median values are shown in the solid lines and the 95\% credible intervals are shown in the shaded regions. The true values are shown by the dashed lines.}
     \label{fig_sir_credible_intervals}
\end{figure}

Given that posterior samples generated by the Rippler and data-informed Rippler algorithms converge to the simulated hidden states, we can now focus on comparison between methods of the computational scaling and mixing properties.

\subsection{Extended Exposure Model} \label{sec_sim_sub_SEIR}

The next model we investigate is an SEIR model: an extension of the SIR model with an additional exposed step (for individuals who have contracted the pathogen but are not yet transmitting it to others). We model the length of time an individual is exposed as non-exponential, and so split the exposed step into an arbitrary $\Sc_E$ compartments.

Let $x_{t,j} \in \{1,\dots,\Sc_E+3\}$ be the current state of individual $j \in \{1,\dots,\Nc\}$ at time-point $t \in \{1,\dots,\Tc\}$, where $x_{t,j}=1$ if they are susceptible, $x_{t,j}=i+1$ if they are exposed (step $i$), $x_{t,j}=\Sc_E+2$ if they are infective, and $x_{t,j}=\Sc_E+3$ if they are recovered. State transitions occur from susceptible to infective at rate $\beta$ for each currently infective individual, from each exposed step $i$ to the next exposed step $i+1$ at constant rate $\sigma_i$, and from infective to recovered at constant rate $\gamma$. A visualisation of the model is shown in Figure \ref{fig_extended_exposure_visual}.

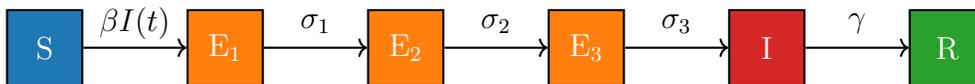
\begin{figure}[H]
    \centering
    \begin{tikzpicture}[thick, scale=1.6, every node/.style={scale=1}]
        \node[thick, draw, rectangle, minimum size=1cm, fill=tab_blue, text=white] (S) at (0,0) {S};
        \node[thick, draw, rectangle, minimum size=1cm, fill=tab_orange, text=white] (E1) at (1.5,0) {E\textsubscript{1}};
        \node[thick, draw, rectangle, minimum size=1cm, fill=tab_orange, text=white] (E2) at (3,0) {E\textsubscript{2}};
        \node[thick, draw, rectangle, minimum size=1cm, fill=tab_orange, text=white] (E3) at (4.5,0) {E\textsubscript{3}};
        \node[thick, draw, rectangle, minimum size=1cm, fill=tab_red, text=white] (I) at (6,0) {I};
        \node[thick, draw, rectangle, minimum size=1cm, fill=tab_green, text=white] (R) at (7.5,0) {R};
        \draw[->,thick] (S)--(E1);
        \draw[->,thick] (E1)--(E2);
        \draw[->,thick] (E2)--(E3);
        \draw[->,thick] (E3)--(I);
        \draw[->,thick] (I)--(R);
        \node (beta) at (0.75,0.2) {$\beta I(t)$};
        \node (sigma1) at (2.25,0.2) {$\sigma_1$};
        \node (sigma2) at (3.75,0.2) {$\sigma_2$};
        \node (sigma3) at (5.25,0.2) {$\sigma_3$};
        \node (gamma) at (6.75,0.2) {$\gamma$};
    \end{tikzpicture}
    \caption{Visualisation of the SEIR model with $\Sc_E=3$ exposure steps.}
    \label{fig_extended_exposure_visual}
\end{figure}

How hidden state inference methods scale with the number of compartments in the model is tested by varying the number of exposed compartments $\Sc_E$, with $\Sc = \Sc_E + 3$ total compartments. We use mean absolute jump distance (MAJD) to measure the mixing of the hidden state inference. The distance between two states is defined as the number of transitions needed to move between them; we have \[\text{MAJD}(\Xb) = \frac{1}{\Kc}\sum_{k=1}^\Kc\sum_{j=1}^\Nc\sum_{t=1}^\Tc \left|x^{(k)}_{t,j}-x^{(k-1)}_{t,j}\right|.\]

As in Section \ref{sec_sim_sub_SIR}, the process is observed by diagnostic tests for infectivity (each individual $j$ and time-point $t$ pair tested with probability 0.1). For each $\Sc_E \in \{1,\dots,7\}$ we simulate a dataset with $\Nc =100$ individuals over $\Tc=100$ time-points. The parameters are set to $\beta=1/50$, $\gamma=1/20$, $s_e=0.8$, $s_p=0.95$, and $\sigma_i = \Sc_E/10$ for each $i \in \{1,\dots,\Sc_E\}$. One individual is initially infective, with all others initially susceptible. The simulated hidden transmission process and observation data when $\Sc_E=3$ are shown in Figure \ref{fig_seir_sim}. Each hidden state inference algorithm (standard Rippler, data-informed Rippler, iFFBS) is run for $\Kc=1000$ iterations, with 10 latent variable updates per iteration. We do not implement the reversible-jump method (because of the high complexity of implementation on a model with an arbitrary number of states). 

Figure \ref{fig_extended_exposure_plots} shows the time taken, and MAJD per unit time, as the number of states $\Sc$ increases. To minimise the impact of implementation, the time taken for each method is relative to the time taken when $\Sc=4$. We see that the iFFBS method is the most efficient when we have few compartments, but that the Rippler methods become more efficient as $\Sc$ increases. 

\begin{figure}[H]
     \centering
     \begin{subfigure}[h]{0.49\textwidth}
         \centering
         \includegraphics[width=\textwidth]{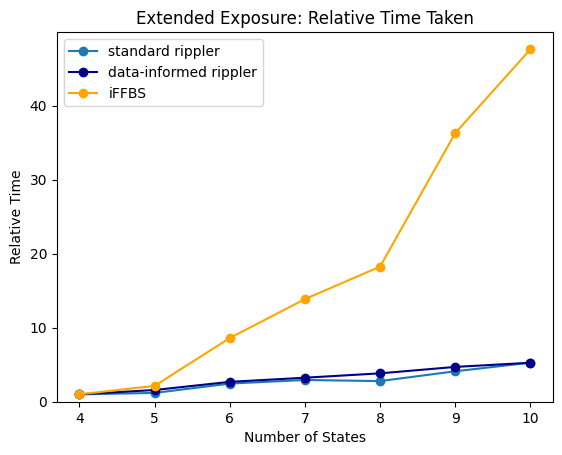}
     \end{subfigure}
     \begin{subfigure}[h]{0.49\textwidth}
         \centering
         \includegraphics[width=\textwidth]{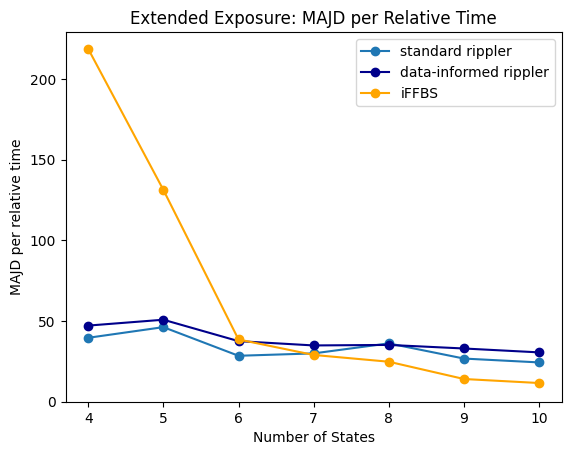}
     \end{subfigure}
     \caption{Relative time taken (left) and MAJD per relative unit time (right) for hidden state inference methods on the extended exposure model (from $\Sc=4$ to $\Sc=10$).}
     \label{fig_extended_exposure_plots}
\end{figure}

\subsection{Multi-strain Model} \label{sec_sim_sub_SIS}

In this section we investigate an additional model that has an arbitrary number of states: a multi-strain model based on \cite{touloupou2020bayesian}. This is an extension of the susceptible-infective-susceptible (SIS) model, splitting the infective compartment into $\Sc_I$ mutually-exclusive strains.

Let $x_{t,j} \in \{1,\dots,\Sc_I+1\}$ be the current state of individual $j \in \{1,\dots,\Nc\}$ at time-point $t \in \{1,\dots,\Tc\}$, where $x_{t,j}=1$ if they are susceptible and $x_{t,j}=i+1$ if they are infected (with strain $i$). State transitions occur from susceptible to infected (with strain $i$) at time-varying rate $\lambda_i(t)$ and from infected (with strain $i$) to recovered at constant rate $\gamma_i$. Individuals can also transition between infected states; the rate of transitioning into strain $i$ is reduced to $\delta\lambda_i(t)$, where $\delta \in [0,1]$ is some constant. For each $i \in \{1,\dots,\Sc_I\}$ we let $\lambda_i(t) = \beta_i I_i(t)$, where $\beta_i$ is a constant and $I_i(t) = \sum_{i=j}^\Nc \mathbf{1}\{x_{t,j}=i+1\}$ (the total number of individuals currently infected with strain $i$). A visualisation of the model is shown in Figure \ref{fig_multi_strain_visual}.

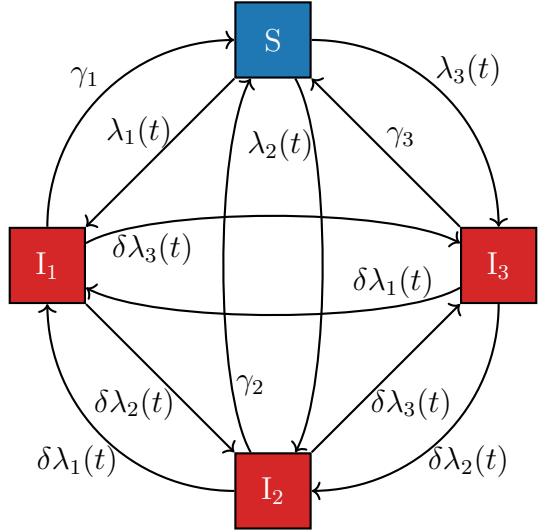
\begin{figure}[H]
    \centering
    \begin{tikzpicture}[thick, scale=2, every node/.style={scale=1}]
        \node[thick, draw, rectangle, minimum size=1cm, fill=tab_blue, text=white] (S) at (0,1.5) {S};
        \node[thick, draw, rectangle, minimum size=1cm, fill=tab_red, text=white] (I1) at (-1.5,0) {I\textsubscript{1}};
        \node[thick, draw, rectangle, minimum size=1cm, fill=tab_red, text=white] (I2) at (0,-1.5) {I\textsubscript{2}};
        \node[thick, draw, rectangle, minimum size=1cm, fill=tab_red, text=white] (I3) at (1.5,0) {I\textsubscript{3}};
        \path[->] (S) edge (I1);
        \node () at (-0.89,0.89) {$\lambda_1(t)$};
        \path[->] (I1) edge [bend left, looseness=1, out=45, in=135] (S);
        \node () at (-1.25,1.25) {$\gamma_1$};
        \path[->] (S) edge [bend left, looseness=0.5] (I2);
        \node () at (0.05,0.8) {$\lambda_2(t)$};
        \path[->] (I2) edge [bend left, looseness=0.5] (S);
        \node () at (-0.15,-0.8) {$\gamma_2$};
        \path[->] (S) edge [bend left, looseness=1, out=45, in=135] (I3);
        \node () at (1.3,1.3) {$\lambda_3(t)$};
        \path[->] (I3) edge (S);
        \node () at (0.85,0.85) {$\gamma_3$};
        \path[->] (I1) edge (I2);
        \node () at (-0.92,-0.92) {$\delta\lambda_2(t)$};
        \path[->] (I2) edge [bend left, looseness=1, out=45, in=135] (I1);
        \node () at (-1.3,-1.3) {$\delta\lambda_1(t)$};
        \path[->] (I1) edge [bend left, looseness=0.5] (I3);
        \node () at (-0.8,0.11) {$\delta\lambda_3(t)$};
        \path[->] (I3) edge [bend left, looseness=0.5] (I1);
        \node () at (0.8,-0.11) {$\delta\lambda_1(t)$};
        \path[->] (I2) edge (I3);
        \node () at (0.92,-0.92) {$\delta\lambda_3(t)$};
        \path[->] (I3) edge [bend left, looseness=1, out=45, in=135] (I2);
        \node () at (1.3,-1.3) {$\delta\lambda_2(t)$};
    \end{tikzpicture}
    \caption{Visualisation of the multi-strain SIS model with $\Sc_I=3$ strains.}
    \label{fig_multi_strain_visual}
\end{figure}


As in Section \ref{sec_sim_sub_SIR}, the process is observed by diagnostic tests for infectivity (each individual $j$ and time-point $t$ pair tested with probability 0.1). For each $\Sc_I \in \{3,\dots,9\}$ we simulate a dataset with $\Nc =40$ individuals over $\Tc=50$ time-points. The parameters are set to $\delta=0.2$, $s_e=0.8$, $s_p=0.95$, and $\beta_i = 1/100$, $\gamma_i = 1/10$ for each $i \in \{1,\dots,\Sc_I\}$. One individual is initially infective with each strain, with all others initially susceptible. The simulated hidden transmission process and observation data when $\Sc_I=3$ are shown in Figure \ref{fig_sis_sim}. Each hidden state inference algorithm (standard Rippler, data-informed Rippler, iFFBS) is run for $\Kc=1000$ iterations, with 10 latent variable updates per iteration.

Figure \ref{fig_multi_strain_plots} shows the time taken, and mean absolute jump distance per unit time, as the number of states $\Sc$ increases. The mean absolute jump distance is calculated as \[\text{MAJD}(\Xb) = \frac{1}{\Kc}\sum_{k=1}^\Kc\sum_{j=1}^\Nc\sum_{t=1}^\Tc \mathbf{1}\left\{x^{(k)}_{t,j} \neq x^{(k-1)}_{t,j}\right\}.\] To minimise the impact of implementation, the time taken for each method is relative to the time taken when $\Sc=4$. We see that Rippler and data-informed Rippler methods scale better with the number of compartments than the iFFBS method.

\begin{figure}[H]
     \centering
     \begin{subfigure}[h]{0.49\textwidth}
         \centering
         \includegraphics[width=\textwidth]{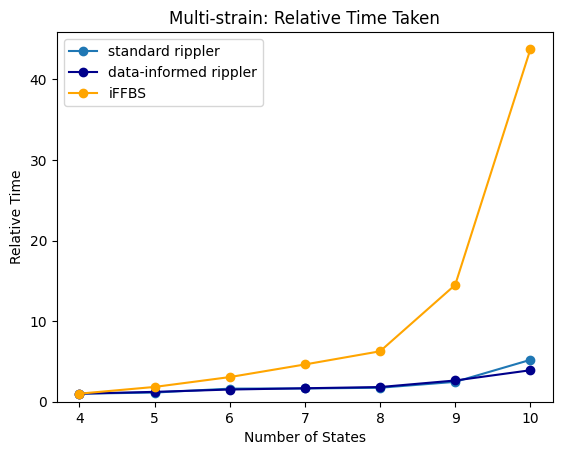}
     \end{subfigure}
     \begin{subfigure}[h]{0.49\textwidth}
         \centering
         \includegraphics[width=\textwidth]{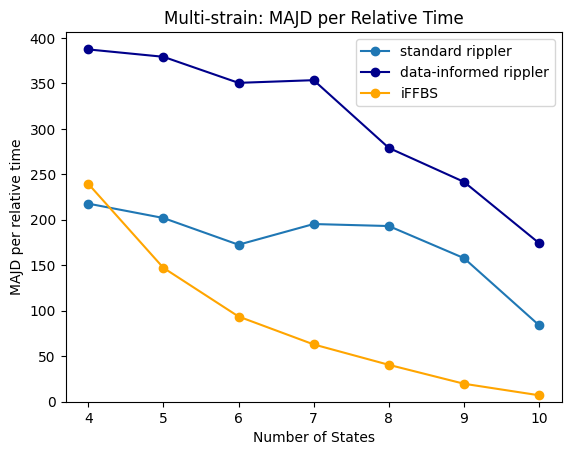}
     \end{subfigure}
     \caption{Relative time taken (left) and MAJD per relative unit time (right) for hidden state inference methods on the multi-strain
     model (from $\Sc=4$ to $\Sc=10$).}
     \label{fig_multi_strain_plots}
\end{figure}

\section{Discussion} \label{sec_dis}

In this article, we have developed two novel MCMC algorithms for Bayesian inference on the hidden states of a CHMM: the Rippler algorithm (an extension from \cite{neill2025non}) and the data-informed Rippler algorithm. These algorithms use the hidden state structure in the proposal of new hidden states (with the data-informed Rippler algorithm also using the observation data in the proposal step).

Both Rippler algorithms can be used for inference on any individual-based state-transition epidemic model or CHMM. Which method is more efficient depends on the transmission model and observation data -- we found that the two methods had similar efficiency for the extended exposure model (Figure \ref{fig_extended_exposure_plots}), but that the data-informed method was more efficient for the multi-strain model (Figure \ref{fig_multi_strain_plots}). This is because in the multi-strain model transitions are possible between any two states, so the data-informed Rippler method is more able to match the hidden state to the data. Note that the standard Rippler method was developed for diagnostic test data, but not for exact observations (e.g., transition times into a given state). However, the data-informed Rippler method is viable in this case and performs well (see Section \ref{sec_supp_sim_sub_partially_known}).

We have also compared the Rippler algorithms to the state-of-the-art methods for inference on partially observed stochastic epidemics -- RJMCMC \citep{gibson1998estimating, o1999bayesian} and iFFBS \citep{touloupou2020scalable} -- and found that they perform better than these other methods as the number of disease states in the model increases. In particular, the computational complexity of the Rippler algorithms scale linearly with the number of disease states, while computational complexity iFFBS scales quadratically with the number of disease states.

The Rippler algorithms can currently only be applied to individual-based stochastic epidemic models. However, population- and metapopulation-based epidemic models are also of interest. Simulating transmission from these models requires sampling from a multinomial distribution rather than a categorical distribution; we anticipate that any difficulty in extending the Rippler methods to these models will be in non-centering the multinomial distribution. Note that in the case where there is no more than one transition out any disease state, we only need to sample from a binomial distribution (simplifying the non-centering). 

\section*{Acknowledgments}

This paper is based on work completed while James Neill was part of the EPSRC funded STOR-i centre for doctoral training EP/S022252/1.

\section*{Supplementary Material}

The Supplementary Material includes further details on the Rippler algorithm and additional simulation studies. The Python code used to run the simulation and inference can be found at \url{https://github.com/neilljn/Generalised_Rippler}.

\bibliography{bibliography.bib}

\newpage

{\LARGE Non-centred Bayesian inference for discrete-valued epidemic models: the Rippler algorithm -- Supplementary Materials}

\renewcommand{\thefigure}{S\arabic{figure}}
\renewcommand{\theequation}{S\arabic{equation}}
\renewcommand{\thetable}{S\arabic{table}}
\renewcommand{\thealgorithm}{S\arabic{algorithm}}
\renewcommand{\thesection}{S\arabic{section}}
\renewcommand{\theproposition}{S\arabic{proposition}}
\setcounter{equation}{0} \setcounter{figure}{0} \setcounter{table}{0} \setcounter{section}{0}

\section{Non-centered Simulation} \label{sec_supp_noncentred}

In Section \ref{sec_model_sub_sim} we discuss taking a non-centred approach to drawing from a Categorical distribution when determining $\Xb$. We define $g$ as the function comparing a random number $u$ to the transition probabilities $(p_1,\dots,p_\Sc)$ in this process, returning $s \in \{1,\dots,\Sc\}$. 

The cumulative distribution function for a Categorical$(\Sc,(p_1,\dots,p_\Sc))$ distribution is \[ F(x) = \begin{cases}
    0 & \text{ if } x < 1,
    \\ \sum_{r=1}^{\lfloor x \rfloor} p_r & \text{ if } 1 \leq x < \Sc,
    \\ 1 & \text{ if } x \geq \Sc.
\end{cases} \]

This means $g(u,\Sc,(p_1,\dots,p_\Sc))$ is equal to the value of $s \in \{1,\dots,\Sc\}$ such that \begin{equation}
    \sum_{r=1}^{s-1} p_r < u < \sum_{r=1}^{s} p_r, \label{u_sim_bounds}
\end{equation} or equivalently, \[ g(u,\Sc,(p_1,\dots,p_\Sc)) = 1 + \sum_{s=1}^\Sc \textbf{1}\left\{u>\sum_{r=1}^sp_r\right\}. \]

\section{Bayesian Inference} \label{sec_supp_rippler}

\subsection{The Rippler Algorithm} \label{sec_supp_rippler_sub_rippler}

In Section \ref{sec_rippler_sub_proposal} we discuss calculating matrices of lower and upper bounds on the random numbers that could produce the latent variables. These values are determined based on the inverse of the non-centred simulation (equation \ref{u_sim_bounds}). For the initial hidden states, we calculate \begin{equation}
    u^{low}_{1,j} = \sum_{r=1}^{x_{1,j}-1}\tilde{p}^{(j)}_r, \hspace{5mm} u^{upp}_{1,j} = \sum_{r=1}^{x_{1,j}}\tilde{p}^{(j)}_r \label{u_bounds_initial}
\end{equation} for each individual $j \in \{1,\dots,\Nc\}$. Then for $t \in \{1,\dots,\Tc-1\}$ we calculate \begin{equation}
    u^{low}_{t+1,j} = \sum_{r=1}^{x_{t+1,j}-1}p^{(t,j)}_{x_{t,j},r}, \hspace{5mm} u^{upp}_{t+1,j} = \sum_{r=1}^{x_{t+1,j}}p^{(t,j)}_{x_{t,j},r} \label{u_bounds_full}
\end{equation} for every individual $j \in \{1,\dots,\Nc\}$. 


The proof of Proposition \ref{prop1} (from Section \ref{sec_rippler_sub_acceptance}) is given below.

\textit{Proof of Proposition \ref{prop1}.} We wish to show $\pi(\Xb|\thetab)q_1(\Ub|\thetab,\Xb) = 1$ and $\pi(\Xb^{\ast}|\thetab)q_1(\Ub^{\ast}|\thetab,\Xb^{\ast}) = 1$. From our transmission model we have \begin{equation}
    \pi(\Xb|\thetab) = \prod_{j=1}^\Nc \left( \tilde{p}^{(j)}_{x_{1,j}} \prod_{t=2}^\Tc p^{(t-1,j)}_{x_{t-1,j},x_{t,j}} \right). \label{X|theta}
\end{equation}

Each element $u_{t,j}$ is proposed uniformly from the range $(u^{low}_{t,j},u^{upp}_{t,j})$. From equations \ref{u_bounds_initial} and \ref{u_bounds_full} the probability density of the proposal at $(t,j)$ is \begin{align*}
    \frac{1}{u^{upp}_{t,j}-u^{low}_{t,j}} &= \begin{dcases}
        \frac{1}{\sum_{r=1}^{x_{1,j}}\tilde{p}^{(j)}_r - \sum_{r=1}^{x_{1,j}-1}\tilde{p}^{(j)}_r} & \text{if }t = 1,
        \\ \frac{1}{\sum_{r=1}^{x_{t,j}}p^{(t-1,j)}_{x_{t-1,j},r}-\sum_{r=1}^{x_{t,j}-1}p^{(t-1,j)}_{x_{t-1,j},r}} & \text{if } t \in\{2,\dots,\Tc\},
    \end{dcases}
    \\ &= \begin{dcases}
        \frac{1}{\tilde{p}^{(j)}_{x_{1,j}}} & \text{if }t = 1,
        \\ \frac{1}{p^{(t-1,j)}_{x_{t-1,j},x_{t,j}}} & \text{if }t \in \{2,\dots,\Tc\}.
    \end{dcases}
\end{align*}

This means the overall proposal density is \begin{equation}
    q_1(\Ub|\thetab,\Xb) = \prod_{j=1}^\Nc \left( \frac{1}{\tilde{p}^{(j)}_{x_{1,j}}} \prod_{t=2}^\Tc \frac{1}{p^{(t-1,j)}_{x_{t-1,j},x_{t,j}}} \right). \label{U|theta,X}
\end{equation}

Multiplying equation \ref{X|theta} by \ref{U|theta,X} the result follows.  By symmetry, the same result follows for the reverse move. \begin{flushright} $\Box$ \end{flushright}

\subsection{Data-informed Rippler} \label{sec_supp_rippler_sub_hybrid}

The modified probabilities for the data-informed Rippler algorithm are calculated as \begin{align*}
    \tilde{\rho}^{(j)}_{s} &= \Pbb(x_{1,j}=s|y_{1,j},\thetab)
    \\ &= \frac{\Pbb(x_{1,j}=s|\thetab) f(y_{1,j}|x_{1,j}=s,\thetab)}{f(y_{1,y}|\thetab)} \hspace{2cm} &\text{Bayes' theorem}
    \\ &= \frac{\Pbb(x_{1,j}=s|\thetab) f(y_{1,j}|x_{1,j}=s,\thetab)}{\sum_{s'=1}^\Sc \Pbb(x_{1,j}=s'|\thetab) f(y_{1,j}|x_{1,j}=s',\thetab)} &\text{law of total probability}
    \\ &= \frac{\tilde{p}^{(j)}_{s}f(y_{1,j}|x_{1,j}=s,\thetab)}{\sum_{s'=1}^\Sc \tilde{p}^{(j)}_{s'} f(y_{1,j}|x_{1,j}=s',\thetab)}
\end{align*} and \begin{align*}
    \rho^{(t,j)}_{r,s} &= \Pbb(x_{t+1,j}=s|x_{t,j}=r,\boldsymbol{x}_{t,-j},y_{t+1,j},\thetab)
    \\ &= \frac{\Pbb(x_{t+1,j}=s|x_{t,j}=r,\boldsymbol{x}_{t,-j},\thetab) f(y_{t+1,j}|x_{t+1,j}=s,x_{t,j}=r,\boldsymbol{x}_{t,-j},\thetab)}{f(y_{t+1,j}|x_{t,j}=r,\boldsymbol{x}_{t,-j},\thetab)} 
    \\ &\hspace{13.2cm} \text{Bayes' theorem}
    \\ &= \frac{\Pbb(x_{t+1,j}=s|x_{t,j}=r,\boldsymbol{x}_{t,-j},\thetab) f(y_{t+1,j}|x_{t+1,j}=s,x_{t,j}=r,\boldsymbol{x}_{t,-j},\thetab)}{\sum_{s'=1}^\Sc \Pbb(x_{t+1,j}=s'|x_{t,j}=r,\boldsymbol{x}_{t,-j},\thetab) f(y_{t+1,j}|x_{t+1,j}=s',x_{t,j}=r,\boldsymbol{x}_{t,-j},\thetab)} 
    \\ &\hspace{11.75cm} \text{law of total probability}
    \\ &= \frac{p^{(t,j)}_{r,s}f(y_{t+1,j}|x_{t+1,j}=s,\thetab)}{\sum_{s'=1}^\Sc p^{(t,j)}_{r,s'} f(y_{t+1,j}|x_{t+1,j}=s',\thetab)}. 
    \\ &\hspace{8.1cm} y_{t+1,j}|x_{t+1,j},\thetab \text{ independent of }x_{t,j}\text{ and }\boldsymbol{x}_{t,-j}
\end{align*}

We now present the modifications to equations \ref{u_bounds_initial} and \ref{u_bounds_full} when using the data-informed Rippler algorithm from Section \ref{sec_rippler_sub_hybrid}. For the initial hidden states, we calculate \begin{equation}
    u^{low}_{1,j} = \sum_{r=1}^{x_{1,j}-1}\tilde{\rho}^{(j)}_r, \hspace{5mm} u^{upp}_{1,j} = \sum_{r=1}^{x_{1,j}}\tilde{\rho}^{(j)}_r \label{u_bounds_initial_hybrid}
\end{equation} for each individual $j \in \{1,\dots,\Nc\}$. Then for $t \in \{1,\dots,\Tc-1\}$ we calculate \begin{equation}
    u^{low}_{t+1,j} = \sum_{r=1}^{x_{t+1,j}-1}\rho^{(t,j)}_{x_{t,j},r}, \hspace{5mm} u^{upp}_{t+1,j} = \sum_{r=1}^{x_{t+1,j}}\rho^{(t,j)}_{x_{t,j},r} \label{u_bounds_full_hybrid}
\end{equation} for every individual $j \in \{1,\dots,\Nc\}$.

The proof of Proposition \ref{prop3} (from Section \ref{sec_rippler_sub_hybrid}) is given below.

\textit{Proof of Proposition \ref{prop3}.} From our transmission model we have \begin{equation}
    \pi(\Yb|\thetab,\Xb) = \prod_{t=1}^\Tc \prod_{j=1}^\Nc f(y_{t,j}|x_{t,j},\thetab). \label{Y|X,theta}
\end{equation}

Each element $u_{t,j}$ is proposed uniformly from the range $(u^{low}_{t,j},u^{upp}_{t,j})$. From equations \ref{u_bounds_initial_hybrid} and \ref{u_bounds_full_hybrid} the probability density of the proposal at $(t,j)$ is \begin{align*}
    \frac{1}{u^{upp}_{t,j}-u^{low}_{t,j}} &= \begin{dcases}
        \frac{1}{\sum_{r=1}^{x_{1,j}}\tilde{\rho}^{(j)}_r - \sum_{r=1}^{x_{1,j}-1}\tilde{\rho}^{(j)}_r} & \text{if }t = 1,
        \\ \frac{1}{\sum_{r=1}^{x_{t,j}}\rho^{(t-1,j)}_{x_{t-1,j},r}-\sum_{r=1}^{x_{t,j}-1}\rho^{(t-1,j)}_{x_{t-1,j},r}} & \text{if } t \in\{2,\dots,\Tc\},
    \end{dcases}
    \\ &= \begin{dcases}
        \frac{1}{\tilde{\rho}^{(j)}_{x_{1,j}}} & \text{if }t = 1,
        \\ \frac{1}{\rho^{(t-1,j)}_{x_{t-1,j},x_{t,j}}} & \text{if }t \in \{2,\dots,\Tc\},
    \end{dcases}
    \\ &= \begin{dcases}
        \frac{\tilde{c}^{(j)}}{\tilde{p}^{(j)}_{x_{1,j}}f(y_{1,j}|x_{1,j})} & \text{if }t = 1,
        \\ \frac{c^{(t-1,j)}_{x_{t-1,j}}}{p^{(t-1,j)}_{x_{t-1,j},x_{t,j}}f(y_{t,j}|x_{t,j})} & \text{if }t \in \{2,\dots,\Tc\}.
    \end{dcases}
\end{align*}

This means the overall proposal density is \begin{equation}
    q_1(\Ub|\thetab,\Xb) = \prod_{j=1}^\Nc \left( \frac{\tilde{c}^{(j)}}{\tilde{p}^{(j)}_{x_{1,j}}f(y_{1,j}|x_{1,j})} \prod_{t=2}^\Tc \frac{c^{(t-1,j)}_{x_{t-1,j}}}{p^{(t-1,j)}_{x_{t-1,j},x_{t,j}}f(y_{t,j}|x_{t,j})} \right). \label{U|theta,X hybrid}
\end{equation}

Multiplying equations \ref{X|theta} by \ref{Y|X,theta} by \ref{U|theta,X hybrid} the result follows.  By symmetry, the same result follows for the reverse move. \begin{flushright} $\Box$ \end{flushright}

\section{Simulation Studies} \label{sec_supp_sim}

\subsection{SIR Model} \label{sec_supp_sim_sub_SIR}

In this section we provide additional results from the simulation study in Section \ref{sec_sim_sub_SIR}. The simulation of the SIR model used as the test dataset, including both the hidden transmission process and observed test results, is shown in Figure \ref{fig_sir_sim}. 

\begin{figure}[H]
     \centering
     \begin{subfigure}[h]{0.49\textwidth}
         \centering
         \includegraphics[width=\textwidth]{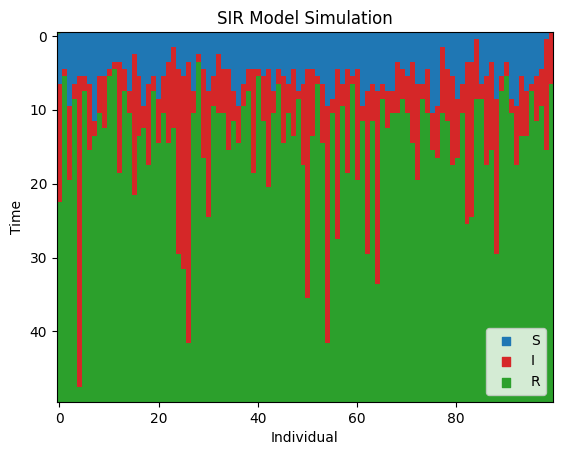}
     \end{subfigure}
     \hfill
     \begin{subfigure}[h]{0.49\textwidth}
         \centering
         \includegraphics[width=\textwidth]{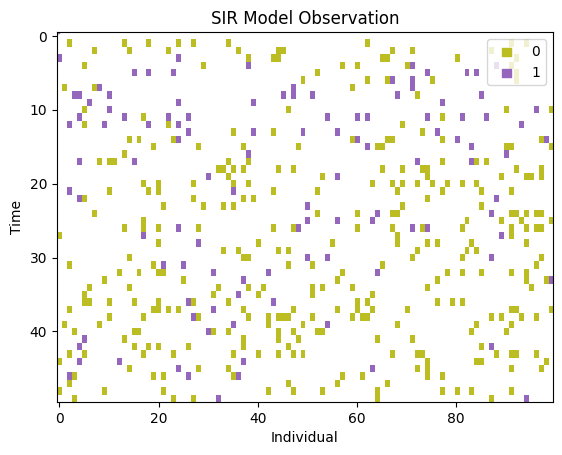}
     \end{subfigure}
     \caption{Simulation of the SIR model: hidden transmission process (left) and observation data (right).}
     \label{fig_sir_sim}
\end{figure}

The frequency of different ripple sizes in the hidden state inference is shown in Figure \ref{fig_sir_ripple_sizes} (ripple sizes between 1 and 100). The size of the ripple is defined as the number of hidden states $x_{t,j}$ changed between the current and proposed latent variables. Figure \ref{fig_sir_ripple_sizes} shows that the data-informed Rippler algorithm both proposes and accepts larger moves than the standard Rippler. Adaptive tuning is used to determine how many elements of $\Ub$ to change in each update (aiming to optimise the proposed ripple sizes). We set exploration probability $\epsilon=0.05$, maximum change $\kappa_{max}=10$, and target acceptance rate $a'=0.234$. The standard Rippler algorithm chooses $\kappa=3$ in $31.4\%$ of updates, $\kappa=4$ in $61.0\%$ of updates, and other $\kappa$ in $4.9\%$ of updates. The data-informed Rippler algorithm chooses $\kappa=5$ in $77.9\%$ of updates, $\kappa=6$ in $16.6\%$ of updates, and other $\kappa$ in $5.5\%$ of updates.

\begin{figure}[H]
     \centering
     \begin{subfigure}[h]{0.49\textwidth}
         \centering
         \includegraphics[width=\textwidth]{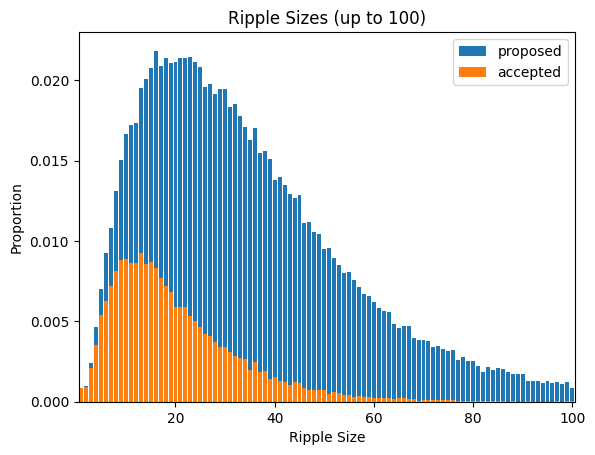}
     \end{subfigure}
     \hfill
     \begin{subfigure}[h]{0.49\textwidth}
         \centering
         \includegraphics[width=\textwidth]{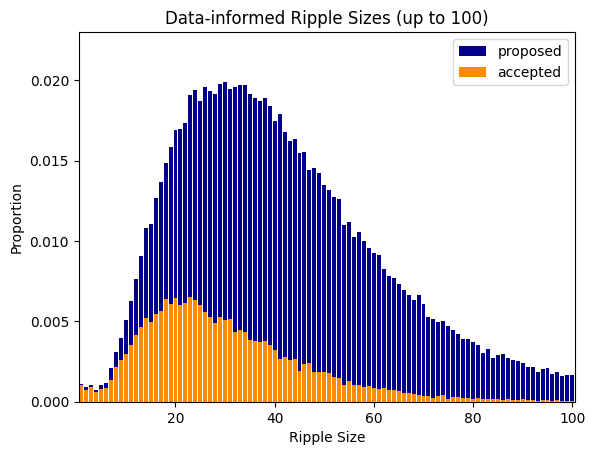}
     \end{subfigure}
     \caption{The frequency of ripple sizes when using the Rippler (left) and data-informed Rippler (right) methods for hidden state inference. The frequencies of proposed and accepted moves are in blue and orange respectively.}
     \label{fig_sir_ripple_sizes}
\end{figure}

Figure \ref{fig_sir_accept_MAJD} compares the acceptance rate and MAJD when varying the number of elements of $\Ub$ changed (when proposing $\Ub^\ast$ in the hidden state inference). Note that in Figure \ref{fig_sir_accept_MAJD} the number of elements changed in each proposal is kept the same for the full MCMC sampling (not using adaptive tuning). We see that the MAJD is maximised at a point adjacent to the target acceptance rate for the adaptive algorithm, 0.234.

\begin{figure}[H]
     \centering
     \begin{subfigure}[h]{0.49\textwidth}
         \centering
         \includegraphics[width=\textwidth]{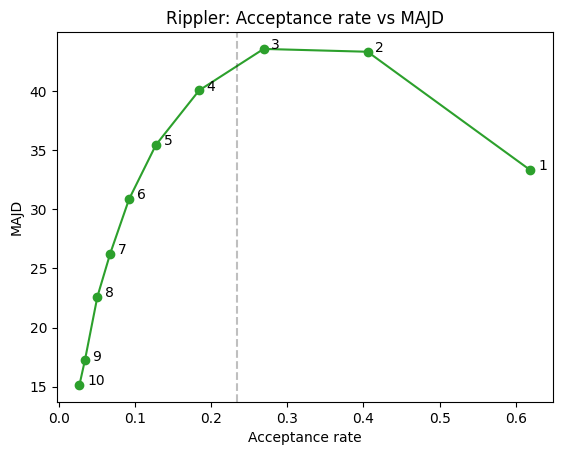}
     \end{subfigure}
     \hfill
     \begin{subfigure}[h]{0.49\textwidth}
         \centering
         \includegraphics[width=\textwidth]{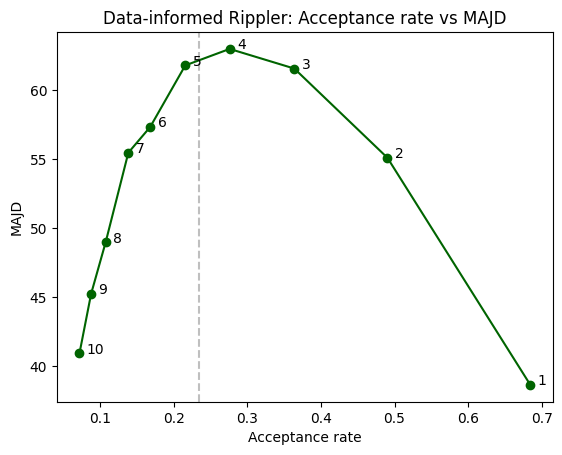}
     \end{subfigure}
     \caption{Comparison of the acceptance rate and MAJD when using the Rippler (left) and data-informed Rippler (right) methods for hidden state inference. The label next to each node is the number of elements of $\Ub$ changed when proposing $\Ub^\ast$ (see step 7 of Algorithms \ref{alg_rippler} and \ref{alg_rippler_data_informed}). The black dashed line is at acceptance rate 0.234, the the adaptive tuning target acceptance rate (see Section \ref{sec_rippler_sub_tuning}).}
     \label{fig_sir_accept_MAJD}
\end{figure}

\subsection{SIR Model: Partially Known Transition Times} \label{sec_supp_sim_sub_partially_known}

In Section \ref{sec_sim_sub_SIR} we assess the performance of the Rippler and data-informed Rippler algorithms on the general stochastic SIR epidemic, using a series of diagnostic tests as observation data. However, the set of observation data can instead be a partial set of known transition event times (with other transition times unknown) -- we now test the data-informed Rippler algorithm on the SIR model in this case (and compare it with iFFBS). Note that we do not use the standard Rippler method here -- it would likely propose hidden states incompatible with the known transition event times, making it very inefficient. The data-informed Rippler method can use the observations in the hidden state proposal, so will be viable with partially known transition times.

The SIR transmission simulation used is the same as in Section Section \ref{sec_sim_sub_SIR}. Recovery event times are used as the observation data (with infection event times unknown). In order to use the data-informed Rippler algorithm, we must express the recovery times as the observation data in a CHMM. Let $\tilde{t}_j$ be the recovery time of each individual $j$. We define \[ y_{t,j} = \begin{cases}
    \text{S/I} & \text{if } t \leq \tilde{t}_j -2  ~~~\text{(either infective or susceptible),}
    \\ \text{I} & \text{if } t = \tilde{t}_j - 1  ~~~\text{(known infective),}
    \\ \text{R} & \text{if } t \geq \tilde{t}_j ~~~\text{(known recovered).}
\end{cases} \] 

We run both the data-informed Rippler and iFFBS algorithms for $\Kc=10{,}000$ iterations, updating the latent variables 10 times per iteration. Posterior samples (both medians and 95\% credible intervals) of the number of individuals in each state over time using both methods are shown in Figure \ref{fig_sir_credible_intervals_known_recovery}. We see that the posterior samples' $95\%$ credible intervals follow the true values closely for both methods. The mixing of the hidden state inference, as measured by MAJD, is significantly greater when using the data-informed Rippler algorithm (108.1) than when using iFFBS (18.5). 

\begin{figure}[H]
     \centering
     \begin{subfigure}[h]{0.49\textwidth}
         \centering
         \includegraphics[width=\textwidth]{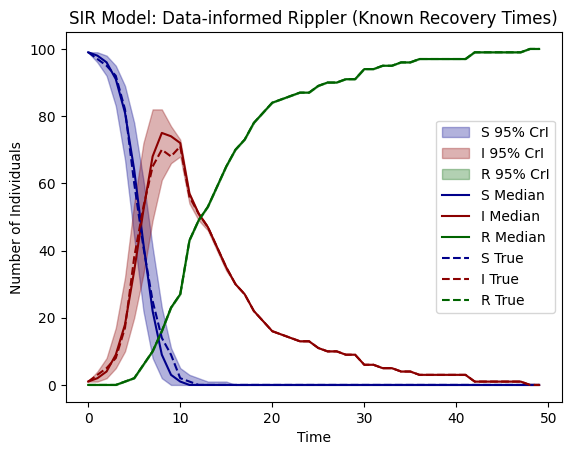}
     \end{subfigure}
     \hfill
     \begin{subfigure}[h]{0.49\textwidth}
         \centering
         \includegraphics[width=\textwidth]{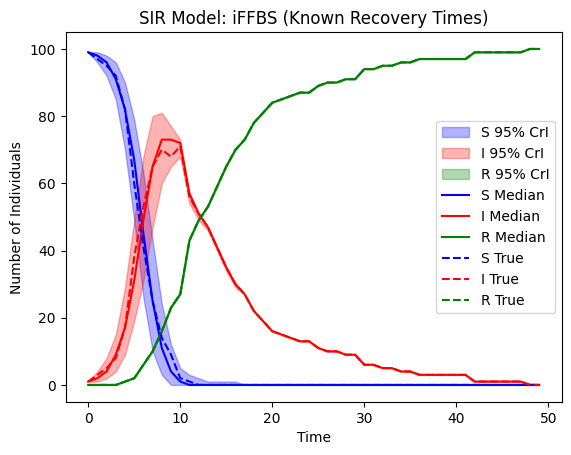}
     \end{subfigure}
     \caption{Posterior samples of the number of individuals in each state over time, using the data-informed Rippler (left), and iFFBS (right) methods on the SIR model (using recovery times as the observed data, instead of diagnostic tests). The median values are shown in the solid lines and the 95\% credible intervals are shown in the shaded regions. The true values are shown by the dashed lines.}
     \label{fig_sir_credible_intervals_known_recovery}
\end{figure}

\subsection{Extended Exposure Model} \label{sec_supp_sim_sub_SEIR}

The simulation of the extended exposure SEIR model used as the test dataset, including both the hidden transmission process and observed test results, is shown in Figure \ref{fig_seir_sim}. Posterior samples (both medians and 95\% credible intervals) of the number of individuals in each state over time using each hidden state inference method are shown in Figure \ref{fig_seir_credible_intervals}.

\begin{figure}[H]
     \centering
     \begin{subfigure}[h]{0.49\textwidth}
         \centering
         \includegraphics[width=\textwidth]{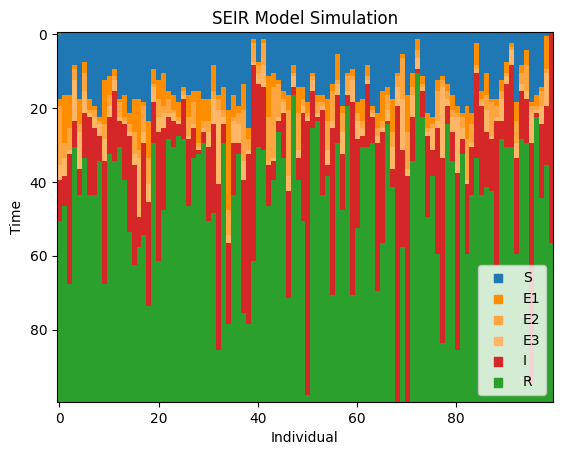}
     \end{subfigure}
     \hfill
     \begin{subfigure}[h]{0.49\textwidth}
         \centering
         \includegraphics[width=\textwidth]{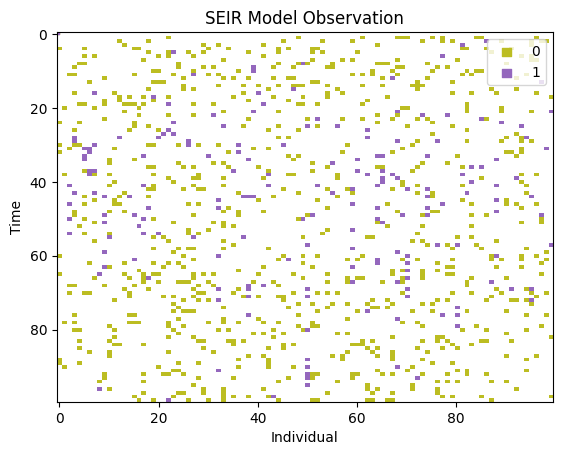}
     \end{subfigure}
     \caption{Simulation of the extended exposure model with $\Sc_E=3$ exposure steps: hidden transmission process (left) and observation data (right).}
     \label{fig_seir_sim}
\end{figure}

\begin{figure}[H]
     \centering
     \begin{subfigure}[h]{0.32\textwidth}
         \centering
         \includegraphics[width=\textwidth]{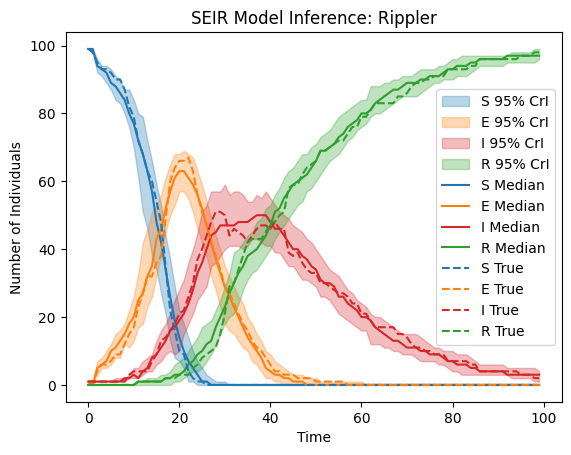}
     \end{subfigure}
     \hfill
     \begin{subfigure}[h]{0.32\textwidth}
         \centering
         \includegraphics[width=\textwidth]{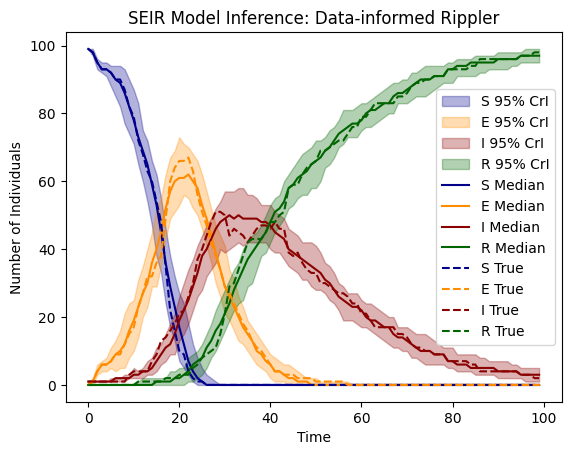}
     \end{subfigure}
     \hfill
     \begin{subfigure}[h]{0.32\textwidth}
         \centering
         \includegraphics[width=\textwidth]{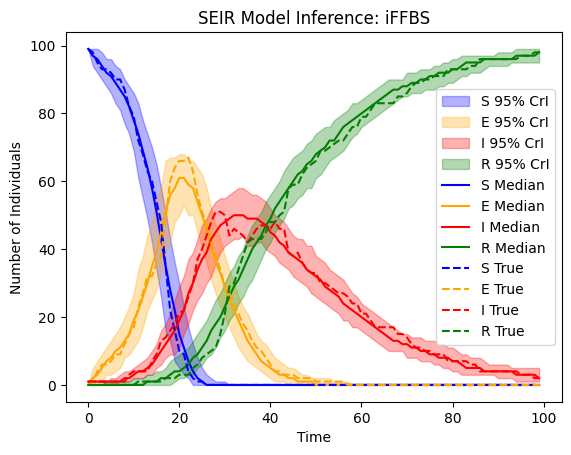}
     \end{subfigure}
     \caption{Posterior samples of the number of individuals in each state over time, using the Rippler (left), data-informed Rippler (middle), and iFFBS (right) methods on the extended exposure model ($\Sc_E=3$ exposure steps). The median values are shown in the solid lines and the 95\% credible intervals are shown in the shaded regions. The true values are shown by the dashed lines.}
     \label{fig_seir_credible_intervals}
\end{figure}

\subsection{Multi-strain Model} \label{sec_supp_sim_sub_SIS}

The simulation of the multi-strain SIS model used as the test dataset, including both the hidden transmission process and observed test results, is shown in Figure \ref{fig_sis_sim}. Posterior samples (both medians and 95\% credible intervals) of the number of individuals in each state over time using each hidden state inference method are shown in Figure \ref{fig_sis_credible_intervals}.

\begin{figure}[H]
     \centering
     \begin{subfigure}[h]{0.49\textwidth}
         \centering
         \includegraphics[width=\textwidth]{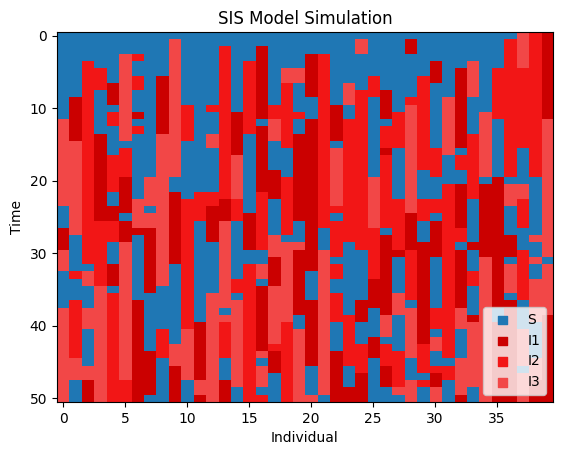}
     \end{subfigure}
     \hfill
     \begin{subfigure}[h]{0.49\textwidth}
         \centering
         \includegraphics[width=\textwidth]{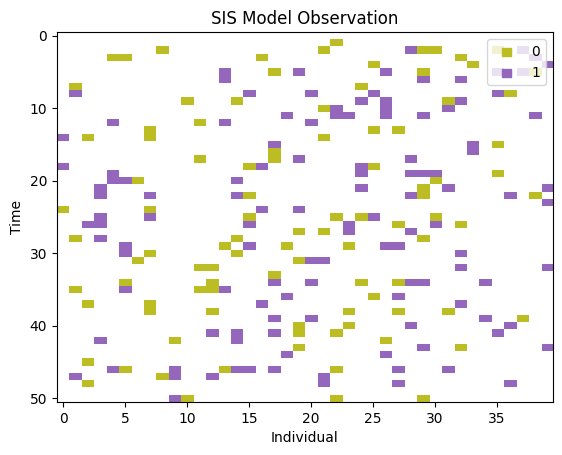}
     \end{subfigure}
     \caption{Simulation of the multi-strain model with $\Sc_I=3$ strains: hidden transmission process (left) and observation data (right).}
     \label{fig_sis_sim}
\end{figure}

\begin{figure}[H]
     \centering
     \begin{subfigure}[h]{0.32\textwidth}
         \centering
         \includegraphics[width=\textwidth]{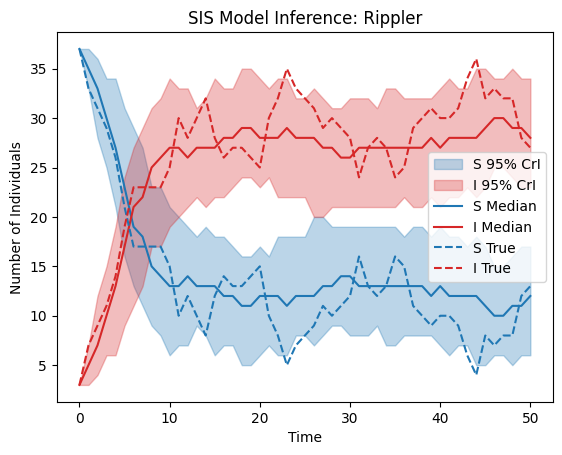}
     \end{subfigure}
     \hfill
     \begin{subfigure}[h]{0.32\textwidth}
         \centering
         \includegraphics[width=\textwidth]{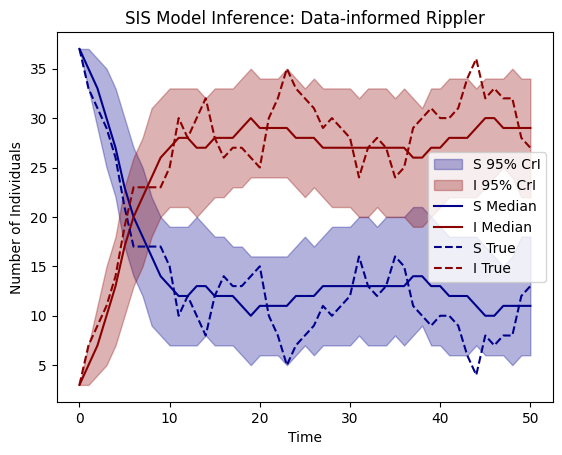}
     \end{subfigure}
     \hfill
     \begin{subfigure}[h]{0.32\textwidth}
         \centering
         \includegraphics[width=\textwidth]{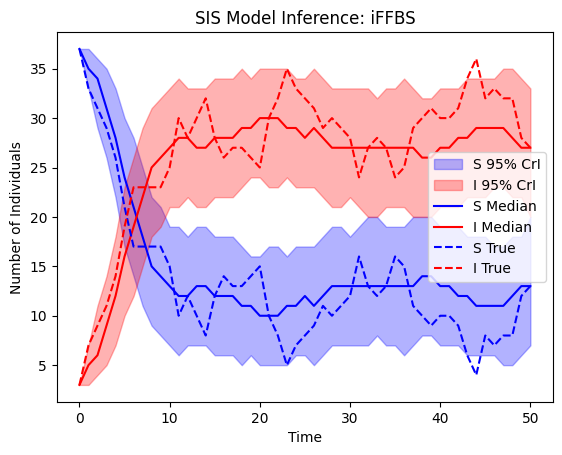}
     \end{subfigure}
     \caption{Posterior samples of the number of individuals in each state over time, using the Rippler (left), data-informed Rippler (middle), and iFFBS (right) methods on the multi-strain model ($\Sc_I=3$ strains). The median values are shown in the solid lines and the 95\% credible intervals are shown in the shaded regions. The true values are shown by the dashed lines.}
     \label{fig_sis_credible_intervals}
\end{figure}

\end{document}